\begin{document}

\title{Wireless Information and Power Transfer: Architecture
Design and Rate-Energy Tradeoff}

\author{Xun Zhou, Rui Zhang, and Chin Keong Ho
\thanks{This paper has been presented in part at IEEE Global Communications
Conference (Globecom), December 3-7, 2012, California, USA.}
\thanks{X. Zhou is with the Department of
Electrical and Computer Engineering, National University of Singapore
(e-mail: xunzhou@nus.edu.sg).}
\thanks{R. Zhang is with the Department
of Electrical and Computer Engineering,
National University of Singapore (e-mail: elezhang@nus.edu.sg).
He is also with the Institute for Infocomm Research, A*STAR, Singapore.}
\thanks{C. K. Ho is with the Institute for Infocomm Research, A*STAR,
Singapore (e-mail: hock@i2r.a-star.edu.sg).}}

\maketitle

\begin{abstract}
Simultaneous information and power transfer over the wireless channels potentially
offers great convenience to mobile users. Yet practical receiver designs impose
technical constraints on its hardware realization, as practical circuits for
harvesting energy from radio signals are not yet able to decode the carried
information directly. To make theoretical progress, we propose a general receiver
operation, namely, \emph{dynamic power splitting} (DPS), which splits the received
signal with adjustable power ratio for energy harvesting and information decoding,
separately. Three special cases of DPS, namely, \emph{time switching} (TS), \emph{static power splitting}
(SPS) and \emph{on-off power splitting} (OPS) are investigated. The TS and SPS schemes can be
treated as special cases of OPS. Moreover, we propose two types of practical receiver
architectures, namely, \emph{separated} versus \emph{integrated} information and energy receivers.
The integrated receiver integrates the front-end components of the separated receiver, thus achieving
a smaller form factor. The rate-energy tradeoff for the two architectures are characterized by a so-called
\emph{rate-energy} (R-E) region. The optimal transmission strategy is derived to achieve different
rate-energy tradeoffs. With receiver circuit power consumption taken into account, it is shown that
the OPS scheme is optimal for both receivers. For the ideal case when the receiver circuit does not
consume power, the SPS scheme is optimal for both receivers. In addition, we study the performance
for the two types of receivers under a realistic system setup that employs practical modulation.
Our results provide useful insights to the optimal practical receiver design for simultaneous wireless information
and power transfer (SWIPT).
\end{abstract}

\begin{keywords}
Simultaneous wireless information and power transfer (SWIPT), rate-energy region, energy harvesting, wireless power, circuit power.
\end{keywords}

\IEEEpeerreviewmaketitle

\setlength{\baselineskip}{1.0\baselineskip}

\newtheorem{definition}{\underline{Definition}}[section]
\newtheorem{fact}{Fact}
\newtheorem{assumption}{Assumption}
\newtheorem{theorem}{\underline{Theorem}}[section]
\newtheorem{lemma}{\underline{Lemma}}[section]
\newtheorem{corollary}{\underline{Corollary}}[section]
\newtheorem{proposition}{\underline{Proposition}}[section]
\newtheorem{example}{\underline{Example}}[section]
\newtheorem{remark}{\underline{Remark}}[section]
\newtheorem{algorithm}{\underline{Algorithm}}[section]

\newcommand{\mv}[1]{\mbox{\boldmath{$ #1 $}}}

\newcommand{\W}{2\pi ft}                
\newcommand{\Xt}{x(t)}                  
\newcommand{\Ybt}{y_{\rm b}(t)}         
\newcommand{\Yo}{\hat y[k]}             
\newcommand{\Yonew}{\hat y[k]}          

\newcommand{\Nax}{\tilde n_{\rm A}(t)}                
\newcommand{\Na}{n_{\rm A}(t)}                        
\newcommand{\Ncov}{n_{\rm cov}(t)}                    

\newcommand{\Va}{\sigma^2_{\rm A}}
\newcommand{\Vcov}{\sigma^2_{\rm cov}}
\newcommand{\Vrec}{\sigma^2_{\rm rec}}                  
\newcommand{\Srec}{\sigma_{\rm rec}}                    

\newcommand{\Rp}{\sqrt{hP}}         
\newcommand{\Pcs}{P_{\rm S}}        
\newcommand{\Pci}{P_{\rm I}}        
\newcommand{\Rs}{R_1^\ast}          
\newcommand{\Ri}{R_2^\ast}          

\section{Introduction}

Harvesting energy from the environment is a promising approach to prolong
the lifetime of energy constrained wireless networks.
Among other renewable energy sources such as solar and wind, background
radio-frequency (RF) signals radiated by ambient transmitters can
be a viable new source for wireless power transfer (WPT).
On the other hand, RF signals have been
widely used as a vehicle for wireless information transmission (WIT).
Simultaneous wireless information and power transfer (SWIPT) becomes appealing
since it realizes both useful utilizations of RF signals at the same time,
and thus potentially offers great convenience to mobile users.

Simultaneous information and power transfer over the wireless channels has been studied in
\cite{Varshney,Sahai,Liu,Zhang,Xiang,Chalise,Fouladgar,Huang,Lee}.
Varshney first proposed the idea of
transmitting information and energy simultaneously in \cite{Varshney}. A capacity-energy
function was proposed to characterize the fundamental performance tradeoff for
simultaneous information and power transfer.
In \cite{Sahai}, Grover and Sahai extended the work in \cite{Varshney} to frequency-selective channels
with additive white Gaussian noise (AWGN). It was shown in \cite{Sahai} that a non-trivial tradeoff
exists for information transfer versus energy transfer via power allocation.
Wireless information and power transfer subject to co-channel interference was studied in \cite{Liu},
in which optimal designs to achieve different outage-energy tradeoffs as well as rate-energy tradeoffs are derived.
Different from the traditional view of taking interference as an undesired factor that jeopardizes the
wireless channel capacity, in \cite{Liu} interference was utilized as a source for energy harvesting.
Unlike \cite{Varshney,Sahai,Liu}, which considered point-to-point single-antenna
transmission, \cite{Zhang,Xiang,Chalise} considered multiple-input multiple-output
(MIMO) systems for SWIPT.
In particular, \cite{Zhang} studied the performance limits of a three-node MIMO broadcasting system,
where one receiver harvests energy and another receiver decodes information
from the signals sent by a common transmitter.
\cite{Xiang} extended the work in \cite{Zhang}
by considering imperfect channel state information (CSI) at the transmitter.
MIMO relay systems involving an energy harvesting receiver were studied in \cite{Chalise},
in which the joint optimal source and relay precoders were designed to achieve different
tradeoffs between the energy transfer and the information rates.
SWIPT for multi-user systems was studied in \cite{Fouladgar}.
It was shown in \cite{Fouladgar} that for multiple access channels with a received energy constraint,
time-sharing is necessary to achieve the maximum sum-rate when the received energy constraint
is sufficiently large; while for the multi-hop channel with a harvesting relay, the transmission
strategy depends on the quality of the second link.
Networks that involve wireless power transfer were studied in \cite{Huang,Lee}.
In \cite{Huang}, the authors studied a hybrid network which overlays an uplink cellular network
with randomly deployed power beacons that charge mobiles wirelessly.
Under an outage constraint on the data links, the tradeoffs between the network parameters
were derived. In \cite{Lee}, the authors investigated a cognitive radio network powered by opportunistic
wireless energy harvesting, where mobiles from the secondary network either harvest
energy from nearby transmitters in a primary network, or transmit information if the
primary transmitters are far away. Under an outage constraint for coexisting networks,
the throughput of the secondary network was maximized.

Despite the recent interest in SWIPT, there remains
two key challenges for practical implementations.
First, it is assumed in \cite{Varshney,Sahai} that the receiver is able to observe
and extract power simultaneously from the same received signal. However, this assumption may not
hold in practice, as practical circuits for harvesting energy from radio signals are
not yet able to decode the carried information directly. Due to this potential limitation,
the results in \cite{Varshney,Sahai} actually provided only optimistic performance bounds.
To coordinate WIT and WPT at the receiver side, two practical
schemes, namely, time switching (TS) and static power splitting (SPS), were proposed in \cite{Zhang}.
Second, the conventional information receiver architecture designed for WIT may not be optimal for
SWIPT, due to the fact that WIT and WPT operate with very
different power sensitivity at the receiver (e.g., -10dBm for energy receivers versus
-60dBm for information receivers). Thus, for a system that involves both WIT and WPT, the receiver
architecture should be optimized for WPT. In addition, circuit power consumed by information
decoding becomes a significant design issue for simultaneous information and power transfer,
since the circuit power reduces the net harvested energy that can be stored in the battery for future use. In particular, the active mixers used in conventional information receiver
for RF to baseband conversion are substantially power-consuming. It thus motivates us to propose new receiver
architectures which consume less power by avoiding the use of active devices.

In this paper, we study practical receiver designs for a point-to-point wireless
link with simultaneous information and power transfer (see Fig. \ref{fig:system model}).
We generalize the TS and SPS schemes proposed in \cite{Zhang} to a general receiver operation
scheme, namely, {\it dynamic power splitting} (DPS), by which the signal is dynamically split into
two streams with arbitrary power ratio over time \cite{Powerdivider,Agilent}.
Besides TS and SPS, another special case of the DPS scheme, namely, {\it on-off power splitting} (OPS)
is also investigated.
The TS and SPS schemes can be treated as special cases of the OPS scheme.
Based on DPS, we first consider the {\it separated} receiver architecture (see Fig. \ref{fig:SepRx})
based on a conventional information-decoding architecture. In this architecture,
the received signal by the antenna is split
into two signal streams in the RF band, which are then separately fed to the conventional
energy receiver and information receiver for harvesting energy and decoding information, respectively.
Note that the receivers in \cite{Zhang} implicitly employ the separated receiver architecture.
Instead, we propose an {\it integrated} receiver architecture (see Fig. \ref{fig:IntRx}), in which we integrate
the information decoding and the energy harvesting circuits. In this architecture,
the active RF band to baseband conversion in conventional information decoding is replaced
by a passive rectifier operation, which is conventionally used only for energy harvesting.
By providing a dual use of the rectifier, the energy cost for information decoding is reduced significantly.

The rate-energy performances for the two proposed receivers are further characterized by
a so-called rate-energy (R-E) region. With receiver circuit power consumption taken into account,
it is shown that the OPS scheme is optimal for both receivers. For the ideal case when the
consumed power at the receiver is negligible, the SPS scheme is optimal for both receivers.
Finally, the performance for the two receivers are compared under a realistic system setup that employs practical
modulation. The results show that for a system with zero-net-energy consumption,
the integrated receiver achieves more rate than separated receiver at sufficiently small transmission distance.

The rest of this paper is organized as follows: Section \ref{sec:system model} presents the system model. Section
\ref{sec:architecture} proposes the two receiver architectures. Section \ref{sec:separated receivers} and
Section \ref{sec:integrated receivers} study the rate-energy performance for the separated and integrated receivers,
respectively. Section \ref{sec:circuit power} extends the results in Sections \ref{sec:separated receivers}
and \ref{sec:integrated receivers} to the case with receiver circuit power taken into consideration.
Section \ref{sec:practical modulation} studies the performance for the two types of receivers under
a realistic system setup. Finally, Section \ref{sec:conclusion} concludes the paper.

\section{System Model}\label{sec:system model}

\subsection{Channel Model}

\begin{figure}
\begin{center}
\scalebox{0.48}{\includegraphics{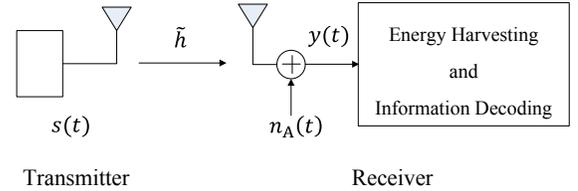}}
\end{center}
\caption{System model.}
\label{fig:system model}
\end{figure}

As shown in Fig. \ref{fig:system model}, this paper studies a
point-to-point wireless link with simultaneous information and power
transfer. Both the transmitter and receiver are
equipped with one antenna. At the transmitter side, the
complex baseband signal is expressed as $x(t)=A(t)e^{j\phi(t)}$,
where $A(t)$ and $\phi(t)$ denote the amplitude and the phase of
$x(t)$, respectively. It is assumed that $x(t)$ is a narrow-band
signal with bandwidth of $B$ Hz, and $\mathbb{E}[|x(t)|^2]=1$, where
$\mathbb{E}[\cdot]$ and $|\cdot|$ denote the statistical expectation
and the absolute value, respectively. The transmitted RF band signal
is then given by $s(t)=\sqrt{2P} A(t)\cos\left( \W+\phi(t)
\right)=\sqrt{2P}\Re\{ x(t) e^{j\W} \}$, where $P$ is the average
transmit power, i.e., $\mathbb{E}[s^2(t)]=P$, $f$ is the carrier
frequency, and $\Re\{\cdot\}$ denotes the real part of a complex
number. It is assumed that $B\ll f$.

The transmitted signal propagates through a wireless channel with
channel gain $h>0$ and phase shift $\theta\in[0,2\pi)$. The
equivalent complex channel is denoted by $\tilde{h}=\sqrt{h}e^{j\theta}$.
The noise $\Na$ after the
receiving antenna\footnote{The antenna noise may include
thermal noise from the transmitter and receiver chains.} can be modeled as a
narrow-band (with bandwidth $B$ and center frequency $f$) Gaussian
noise, i.e.,
$\Na=\sqrt{2}\Re \{ \Nax e^{j\W} \}$, where $\Nax=n_{\rm I}(t)+j n_{\rm Q}(t)$
with $n_{\rm I}(t)$ and $n_{\rm Q}(t)$
denoting the in-phase and quadrature noise components, respectively.
We assume that $n_{\rm I}(t)$ and $n_{\rm Q}(t)$ are independent Gaussian
random variables (RVs) with zero mean and variance $\sigma^2_{\rm A} /2$,
denoted by $\mathcal{N}(0,\sigma^2_{\rm A} /2)$, where $\sigma^2_{\rm A}=N_0 B$,
and $N_0$ is the one-sided noise power spectral density.
Thus, we have $\Nax \sim \mathcal{CN}(0,\sigma^2_{\rm A})$, i.e., $\Nax$
is a circularly symmetric complex Gaussian (CSCG) RV with zero mean
and variance $\sigma^2_{\rm A}$.
Corrupted by the antenna noise, the received signal $y(t)$ is given by
$y(t)=\sqrt{2}\Re \{\tilde y(t)\}$, where the complex signal $\tilde
y(t)$ is
\begin{align}\label{eq:yt-complex}
    \tilde y(t)=\Rp x(t) e^{j(\W+\theta)} + \Nax e^{j\W}.
\end{align}


\subsection{Information Receiver}

\begin{figure}
\begin{center}
\scalebox{0.48}{\includegraphics{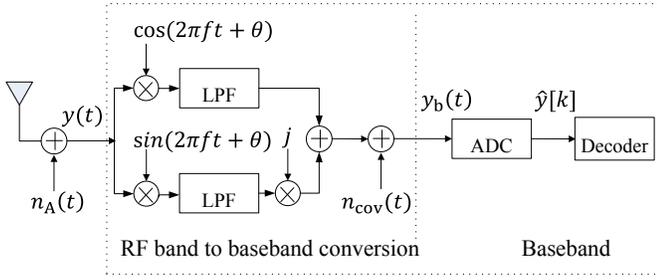}}
\end{center}
\caption{Information receiver.}
\label{fig:ID}
\end{figure}

First, we consider the case where the receiver shown in Fig.
\ref{fig:system model} is solely an information receiver.
Fig. \ref{fig:ID} shows the standard operations at an
information receiver with coherent demodulation (assuming that the
channel phase shift $\theta$ is perfectly known at the receiver).
The received RF band signal $y(t)$ is first converted to a complex
baseband signal $y_{\rm b}(t)$ and then sampled and digitalized
by an analog-to-digital converter (ADC) for further decoding.
The noise introduced by the RF band to baseband signal conversion
is denoted by $\Ncov$ with $\Ncov \sim \mathcal{CN}(0,\sigma^2_{\rm cov})$.
For simplicity, we assume an ideal ADC with zero
noise\footnote{The general case with nonzero ADC noise is considered in Remark \ref{remark:adc noise}.}.
The discrete-time ADC output is then given by
\begin{align}\label{eq:ID-yout}
    \Yo &=\Rp x[k]+ \tilde{n}_{\rm A}[k] + n_{\rm cov}[k]
\end{align}
where $k=1,2,\ldots$, denotes the symbol index.

It follows from (\ref{eq:ID-yout}) that the equivalent baseband
channel for wireless information transmission is the well-known
AWGN channel:
\begin{equation}\label{}
    Y=\Rp X+Z
\end{equation}
where $X$ and $Y$ denote the channel input and output, respectively,
and $Z\sim \mathcal{CN}(0,\sigma^2_{\rm A}+\sigma^2_{\rm cov})$ denotes the
complex Gaussian noise (assuming independent $\Nax$ and $\Ncov$).
When the channel input is distributed as $X\sim \mathcal{CN}(0,1)$,
the maximum achievable information rate (in bps/Hz) or the capacity
of the AWGN channel is given by \cite{Cover}
\begin{align}\label{eq:ID-rate}
      R&=\log_2\left(1+\frac{hP}{\sigma^2_{\rm A}+\sigma^2_{\rm cov}}\right).
\end{align}

\subsection{Energy Receiver} \label{subsec:energy receivers}

\begin{figure}
\begin{center}
\scalebox{0.48}{\includegraphics{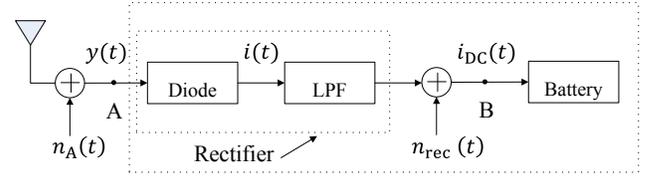}}
\end{center}
\caption{Energy receiver.}
\label{fig:EH}
\end{figure}

Next, we consider the case where the receiver in Fig.
\ref{fig:system model} is solely an energy receiver, and derive the
average wireless power that can be harvested from the received
signal. Fig. \ref{fig:EH} illustrates the operations of a typical
energy receiver that converts RF energy directly via a {\it rectenna}
architecture \cite{Paing}. In the rectenna, the received
RF band signal $y(t)$ is converted to a direct current (DC) signal
$i_{\rm DC}(t)$ by a rectifier, which consists of a Schottky diode and a
passive low-pass filter (LPF). The DC signal $i_{\rm DC}(t)$ is then used to
charge the battery to store the energy. With an input voltage
proportional to $y(t)$, the output current $i(t)$ of a Schottky
diode is given by \cite{Akkermans}:
\begin{equation}
\label{eq:it}
    i(t)=I_s \left( e^{\gamma y(t)}-1 \right)  =a_1 y(t)+a_2 y^2(t)+a_3 y^3(t)+ \cdots
\end{equation}
where $I_s$ denotes the saturation current, $\gamma$ denotes the
reciprocal of the thermal voltage of the Schottky diode, and the
coefficients $a_n$'s are given by $a_n=I_s \gamma^n /
n!,n=1,2,\ldots$, due to the Taylor series expansion of the exponential function.

From (\ref{eq:yt-complex}), for convenience we re-express $y(t)$ as
follows:
\begin{align}\label{eq:yt}
    y(t)
        &=\sqrt{2}\Re\{ \Rp x(t) e^{j(\W+\theta)} + \Nax e^{j\W}  \}  \nonumber \\
        &=\sqrt{2}\mu_{\rm Y}(t)\cos\left( \W +\phi_{\rm Y}(t) \right)
\end{align}
where $\phi_{\rm Y}(t)=\arctan \frac{\mu_{\rm Q}(t)}{\mu_{\rm I}(t)}$ and
\begin{align}\label{eq:muYt}
    \mu_{\rm Y}(t)=\sqrt{ \mu^2_{\rm I}(t) + \mu^2_{\rm Q}(t) }
\end{align}
with
\begin{align}\label{}
    & \mu_{\rm I}(t)=\Rp A(t)\cos\left( \phi(t)+\theta \right)  +n_{\rm I}(t)  \label{eq:muIt} \\
    & \mu_{\rm Q}(t)=\Rp A(t)\sin\left( \phi(t)+\theta \right)  +n_{\rm Q}(t). \label{eq:muQt}
\end{align}

By substituting (\ref{eq:yt}) into (\ref{eq:it}) and ignoring the
higher-order (larger than two) terms of $y(t)$, since $\gamma y(t)$ is practically a small number close to zero, we obtain
\begin{align}\label{}
    i(t)& \approx \sqrt{2}a_1 \mu_{\rm Y}(t)\cos\left(\W +\phi_{\rm Y}(t)\right) \nonumber \\
        & ~~~~ + 2a_2 \mu^2_{\rm Y}(t)\cos^2\left(\W +\phi_{\rm Y}(t)\right) \nonumber \\
        & = a_2 \mu^2_{\rm Y}(t) + \sqrt{2}a_1 \mu_{\rm Y}(t)\cos\left(\W +\phi_{\rm Y}(t)\right)  \nonumber \\
        & ~~~~ + a_2 \mu^2_{\rm Y}(t)\cos\left(4\pi ft +2\phi_{\rm Y}(t)\right).
\end{align}

The output current $i(t)$ of the diode is processed by a LPF,
through which the high-frequency harmonic components at both $f$ and $2f$ in $i(t)$
are removed and a DC signal $i_{\rm DC}(t)$ appears as the output of the
rectifier. Assuming that the additive noise introduced by the
rectifier is $n_{\rm rec}(t)$, the filtered output $i_{\rm DC}(t)$ is thus given by
\begin{align}\label{eq:idc-pre}
    i_{\rm DC}(t) = a_2 \mu^2_{\rm Y}(t) +n_{\rm rec}(t).
\end{align}
Since $a_2$ is a constant specified by the diode, for convenience we
assume in the sequel that $a_2=1$ (with $n_{\rm rec}(t)$
normalized accordingly to maintain the signal-to-noise ratio (SNR)).
Note that in (\ref{eq:idc-pre}), $a_2$ involves unit conversion from a power signal to a current
signal, thus by normalization $n_{\rm rec}(t)$ can be equivalently viewed as a power signal.
Assume $n_{\rm rec}(t)\sim \mathcal{N}(0,\sigma^2_{\rm rec})$, where $\Srec$ is in watt.
Substituting (\ref{eq:muYt}),
(\ref{eq:muIt}) and (\ref{eq:muQt}) into (\ref{eq:idc-pre}) yields
\begin{align}\label{eq:idc}
   & i_{\rm DC}(t)={\left( \Rp A(t)\cos\left( \phi(t)+\theta \right)  +n_{\rm I}(t)  \right)}^2   \nonumber \\
          &~~~~ + {\left( \Rp A(t)\sin\left( \phi(t)+\theta \right)  +n_{\rm Q}(t)  \right)}^2
              + n_{\rm rec}(t).
\end{align}

We assume that the converted energy to be stored in the battery is
linearly proportional to $i_{\rm DC}(t)$ \cite{Powercast}, with a
conversion efficiency $0<\zeta\leq1$. We also assume that the
harvested energy due to the noise (including both the antenna noise
and the rectifier noise) is a small constant and thus ignored.
Hence, the harvested energy (assuming the symbol period to be one) stored in
the battery, denoted by $Q$ in joule, is given by\footnote{For convenience,
in the sequel of the paper the two terms ``energy'' and
``power'' may be used interchangeably by assuming the symbol period to be one.}
\begin{align}\label{Q}
    Q=\zeta\mathbb{E}[i_{\rm DC}(t)]=\zeta hP.
\end{align}

\subsection{Performance Upper Bound}
Now consider the general case of interest where both information
decoding and energy harvesting are implemented at the receiver, as
shown in Fig. \ref{fig:system model}. Our main objective is to
maximize both the decoded information rate $R$ and harvested energy
$Q$ from the same received signal $y(t)$. Based on the results in
the previous two subsections, we derive an upper bound for the
performance of any practical receiver with the joint operation of
information decoding and energy harvesting, as follows. For
information transfer, according to the data-processing inequality
\cite{Cover}, with a given antenna noise $\Nax \sim \mathcal{
CN}(0,\sigma^2_{\rm A})$, the maximum information rate $R$ that can be
reliably decoded at the receiver is upper-bounded by $R\leq\log_2
(1+hP/\sigma^2_{\rm A})$. Note that state-of-the-art wireless information
receivers are not yet able to achieve this rate upper bound due to
additional processing noise such as the RF band to baseband
conversion noise $\Ncov$, as shown in (\ref{eq:ID-rate}). On the
other hand, for energy transfer, according to the law of energy
conservation, the maximum harvested energy $Q$ to be stored in the
battery cannot be larger than that received by the receiving
antenna, i.e., $Q\leq hP$. Note that practical energy receivers
cannot achieve this upper bound unless the energy conversion
efficiency $\zeta$ is made ideally equal to unity, as suggested by
(\ref{Q}). Following the definition of rate-energy (R-E) region
given in \cite{Varshney,Sahai,Zhang} to characterize
all the achievable rate (in bps/Hz for information transfer) and
energy (in joules/sec for energy transfer) pairs under a given
transmit power constraint $P$, we obtain a performance upper bound
on the achievable R-E region for the system in Fig. \ref{fig:system
model} as
\begin{equation}\label{eq:RE upper}
\hspace{-0.0cm} \mathcal{C}_{\rm R-E}^{\rm UB}(P)\triangleq\left\{(R,Q):
                   R\leq \log_2\left(1+\frac{hP}{\sigma^2_{\rm A}}\right), Q\leq hP \right\}
\hspace{-0.25cm}
\end{equation}
which is a box specified by the origin and the three vertices $(0,Q_{\rm max})$,
$(R_{\rm max}, 0)$ and $(R_{\rm max},Q_{\rm max})$, with $Q_{\rm
max} = hP$ and $R_{\rm max} =\log_2(1+hP/\sigma^2_{\rm A})$.
This performance bound is valid for all receiver architectures, some of which
will be studied next.

\section{Receiver Architecture for Wireless Information and Power Transfer}\label{sec:architecture}
This section considers practical receiver designs for
simultaneous wireless information and power transfer. We
propose a general receiver operation called {\it dynamic power
splitting} (DPS), from which we propose {\it separated} information
and energy receiver and {\it integrated} information and energy receiver.

\subsection{Dynamic Power Splitting}
Currently, practical circuits for harvesting energy from radio
signals are not yet able to decode the carried information directly.
In other words, the signal that is used for harvesting energy cannot
be reused for decoding information. Due to this potential
limitation, we propose a practical DPS scheme to enable the receiver
to harvest energy and decode information from the same received
signal at any time $t$, by dynamically splitting the signal into two
streams with the power ratio $\rho(t):1-\rho(t)$, which are used for harvesting
energy and decoding information, respectively, where $0\leq\rho(t)\leq1$.

Consider a block-based transmission of duration $T$ with $T=NT_{\rm s}$,
where $N$ denotes the number of transmitted symbols per block and
$T_{\rm s}$ denotes the symbol period.
We assume that $\rho(t)=\rho_k$ for
any symbol interval $t\in [(k-1)T_{\rm s},kT_{\rm s}),k=1,\ldots,N$. For
convenience, we define a power splitting vector as
${\mv{\rho}}=[\rho_1,\ldots,\rho_N]^T$.
In addition, in this paper
we assume an ideal power splitter \cite{Powerdivider,Agilent} at the receiver without any power
loss or noise introduced, and that the receiver can perfectly
synchronize its operations with the transmitter based on a given
vector ${\mv{\rho}}$.
During the transmission block time $T$, it is assumed that the information receiver
may operate in two modes: switch off (off mode) for a time duration $T_{\rm off}$
to save power, or switch on (on mode) for a time duration $T_{\rm on}=T-T_{\rm off}$
to decode information. The percentage of time that the information decoder operates
in off mode is denoted by $\alpha$ with $0\leq\alpha\leq1$,
thus we have $T_{\rm off}=\alpha T$ and $T_{\rm on}=(1-\alpha)T$.
Without loss of generality, we assume that the information receiver operates in off
mode during the first $\lfloor\alpha N\rfloor$ symbols during each block with
$k=1,\ldots,\lfloor\alpha N\rfloor$, where $\lfloor\cdot\rfloor$ denotes the floor
operation, while in on mode during the remaining symbols with
$k=\lfloor\alpha N\rfloor +1,\ldots,N$.
For convenience, we also assume in the sequel that
$\alpha N$ is a positive integer regardless of the value of
$\alpha$, which is approximately true if $N$ is chosen to be a very
large number in practice.

Next, we investigate three special cases of DPS, namely {\it time switching} (TS),
{\it static power splitting} (SPS) and {\it on-off power splitting} (OPS) given in \cite{Zhang}:
\begin{itemize}
\item {\it Time switching (TS)}: With TS, for the first $\alpha N$ symbols
    when the information receiver operates in off mode, all signal power is used for
    energy harvesting. For the remaining
    $(1-\alpha)N$ symbols when the information receiver operates
    in on mode, all signal power is used for information decoding.
    Thus for TS, we have
    \begin{equation}\label{eq:rho-TS}
    \rho_k=\begin{cases}
           1, & k=1,\ldots,\alpha N  \\
           0, & k=\alpha N+1,\ldots,N.
           \end{cases}
    \end{equation}

\item {\it Static power splitting (SPS)}: With SPS, the information receiver operates in
    on mode for all $N$ symbols, i.e., $\alpha=0$. Moreover, the ratio of the split signal power
    for harvesting energy and decoding information is set to be a constant $\rho$ for all $N$ symbols.
    Thus for SPS, we have
    \begin{equation}\label{eq:rho-SPS}
        \rho_k=\rho, \ k=1,\ldots,N.
    \end{equation}

\item{\it On-off power splitting (OPS)}: With OPS, for the first $\alpha N$ symbols
    all signal power is used for energy harvesting. For the remaining
    $(1-\alpha)N$ symbols, the ratio of the split signal power
    for harvesting energy and decoding information is set to be a constant $\rho$, with $0\leq\rho<1$.
    Thus, for a given power splitting pair $(\alpha, \rho)$, we have
    \begin{align}\label{eq:rho-OPS}
        \rho_k=\begin{cases}
           1, & k=1,\ldots,\alpha N  \\
           \rho, & k=\alpha N+1,\ldots,N.
           \end{cases}
    \end{align}
    Note that TS and SPS are two special cases of OPS by letting $\rho=0$ (for TS)
    or $\alpha=0$ (for SPS) in (\ref{eq:rho-OPS}).
\end{itemize}

\subsection{Separated vs. Integrated Receivers}

\begin{figure}
\begin{center}
\scalebox{0.48}{\includegraphics{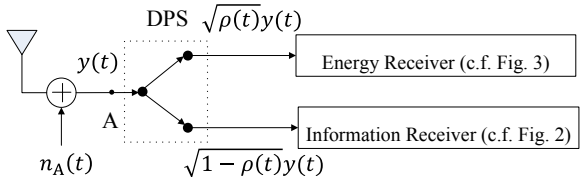}}
\end{center}
\caption{Architecture for the separated information and energy receiver.}
\label{fig:SepRx}
\end{figure}

\begin{figure}
\begin{center}
\scalebox{0.48}{\includegraphics{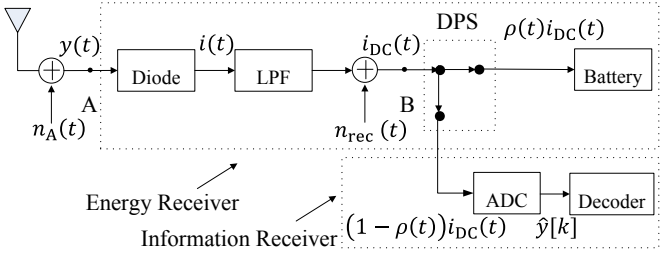}}
\end{center}
\caption{Architecture for the integrated information and energy receiver.}
\label{fig:IntRx}
\end{figure}

In this subsection, we propose two types of receivers that exploit
the DPS scheme in different ways. The first type of receivers is called
{\it separated} information and energy receiver, as shown in Fig.
\ref{fig:SepRx}, while the second type is called {\it integrated}
information and energy receiver, as shown in Fig. \ref{fig:IntRx}.
These two types of receivers both use the energy receiver
in Fig. \ref{fig:EH} for energy harvesting. Their
difference lies in that for the case of separated receiver, the
power splitter for DPS is inserted at point `A' in the RF band of
the energy receiver shown in Fig. \ref{fig:EH}, while in the
case of integrated receiver, the power splitter is inserted at
point `B' in the baseband.

First, we consider the case of separated information and energy
receiver. As shown in Fig. \ref{fig:SepRx}, a power splitter is
inserted at point `A', such that the received signal $y(t)$ by the
antenna is split into two signal streams with power levels specified
by $\rho(t)$ in the RF band, which are then separately fed to the
conventional energy receiver (cf. Fig. \ref{fig:EH}) and information
receiver (cf. Fig. \ref{fig:ID}) for harvesting energy and decoding
information, respectively. The achievable R-E region for this type
of receivers with DPS will be studied in Section \ref{sec:separated
receivers}.

Next, we consider the integrated information and energy receiver,
as motivated by the following key observation. Since the transmitted
power in a wireless power transfer system can be varied over time
provided that the average power delivered to the receiver is above a
certain required target, we can encode information in the energy
signal by varying its power levels over time, thus achieving
continuous information transfer without degrading the power transfer
efficiency.
To emphasize this dual use of signal power in both WPT as well as WIT, the
modulation scheme is called {\em energy modulation}. A constellation example,
namely, {\em pulse energy modulation}, is provided later in
Section \ref{sec:practical modulation}.
Note that to decode the energy
modulated information at the receiver, we need to detect the power
variation in the received signal within a certain accuracy, by
applying techniques such as {\em energy detection} \cite{Urkowitz}.
Recall that in Section \ref{subsec:energy receivers}, for the energy
receiver in Fig. \ref{fig:EH}, the received RF signal $y(t)$ is
converted to a DC signal $i_{\rm DC}(t)$ given in (\ref{eq:idc}) by a
rectifier. Note that this RF to DC conversion is analogous to the
RF band to baseband conversion in conventional wireless information
receivers in Fig. \ref{fig:ID}. Thus, $i_{\rm DC}(t)$ can be treated as
a baseband signal for information decoding (via energy detection).

Based on the above observation, we propose the integrated
information and energy receiver as shown in Fig. \ref{fig:IntRx},
by inserting a power splitter at point `B' of the conventional
energy receiver. With DPS, $i_{\rm DC}(t)$ is split into two portions
specified by $\rho(t)$ for energy harvesting and information
decoding, respectively. Note that unlike the traditional information
receiver in Fig. \ref{fig:ID}, the information receiver in the case
of integrated receiver does not implement any RF band to baseband
conversion, since this operation has been {\em integrated} to the
energy receiver (via the rectifier). The achievable R-E region for
this type of receivers will be studied in Section
\ref{sec:integrated receivers}.

\section{Rate-Energy Tradeoff for Separated Information and Energy Receiver}\label{sec:separated receivers}
In this section, we study the achievable R-E region for the
separated information and energy receiver shown in Fig.
\ref{fig:SepRx}.
With DPS, the average SNR
at the information receiver for the $k$-th transmitted symbol,
$k=1,\ldots,N$, is denoted by $\tau(\rho_k)$, and given by
\begin{equation}\label{eq:SNR}
    \tau(\rho_k)=\frac{(1-\rho_k)hP}{(1-\rho_k)\sigma^2_{\rm A}+\sigma^2_{\rm cov}}.
\end{equation}
From (\ref{eq:SNR}), we obtain the achievable R-E region for the DPS
scheme in the case of separated receiver as
\begin{align}\label{eq:RE DPS}
    \mathcal{C}_{\rm R-E}^{\rm DPS}
     &  (P)\triangleq \bigcup\limits_{\mv{\rho}}\left\{(R,Q):
      Q\leq\frac{1}{N}\sum\limits_{k=1}^{N}\rho_k \zeta hP,\right.\nonumber \\
     & \left. R\leq \frac{1}{N}\sum\limits_{k=1}^{N} \log_2\left(1+\frac{(1-\rho_k)hP}{(1-\rho_k)\sigma^2_{\rm A}+\sigma^2_{\rm cov}} \right) \right\}.
\end{align}

Next, we address the two special cases of DPS, i.e., the TS scheme
and the SPS scheme. Substituting (\ref{eq:rho-TS}) into (\ref{eq:RE
DPS}), the achievable R-E region for the TS scheme is given by
\begin{align}\label{eq:RE TS}
    \mathcal{C}_{\rm R-E}^{\rm TS}(P)\triangleq
    &  \bigcup\limits_{\alpha}\bigg\{ (R,Q): Q\leq \alpha \zeta hP, \nonumber \\
    &  \left.  R\leq  (1-\alpha)\log_2 \left(1+\frac{hP}{\sigma^2_{\rm A}+\sigma^2_{\rm cov}}\right)  \right\}.
\end{align}
Let $\hat R_{\rm max}=\log_2\left(1+hP/(\sigma^2_{\rm A}+\sigma^2_{\rm cov})\right)$ given in
(\ref{eq:ID-rate}) and $\hat Q_{\rm max}=\zeta hP$ given in (\ref{Q}). It
is noted that the boundary of $\mathcal{C}_{\rm R-E}^{\rm TS}(P)$ is
simply a straight line connecting the two points $(\hat R_{\rm max},0)$
and $(0,\hat Q_{\rm max})$ as $\alpha$ sweeps from 0 to 1.

Substituting (\ref{eq:rho-SPS}) into (\ref{eq:RE DPS}), the achievable R-E region for the SPS scheme is given by
\begin{align}\label{eq:RE SPS}
    \mathcal{C}_{\rm R-E}^{\rm SPS}(P)\triangleq
    &  \bigcup\limits_{\rho}\bigg\{ (R,Q): Q\leq \rho \zeta hP,  \nonumber \\
    &  \left. R\leq\log_2 \left(1+\frac{(1-\rho)hP}{(1-\rho)\sigma^2_{\rm A}+\sigma^2_{\rm cov}}\right)
             \right\}.
\end{align}

\begin{proposition}\label{proposition:1}
For the separated information and energy receiver, the SPS scheme
is the optimal DPS scheme, i.e., $\mathcal{C}_{\rm R-E}^{\rm
DPS}(P)=\mathcal{C}_{\rm R-E}^{\rm SPS}(P),P\geq 0$.
\end{proposition}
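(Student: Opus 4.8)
The plan is to prove the two set inclusions separately. The inclusion $\mathcal{C}_{\rm R-E}^{\rm SPS}(P)\subseteq\mathcal{C}_{\rm R-E}^{\rm DPS}(P)$ is immediate: SPS corresponds to the subfamily of power-splitting vectors $\mv{\rho}$ appearing in (\ref{eq:RE DPS}) whose entries are all equal, so the union defining $\mathcal{C}_{\rm R-E}^{\rm SPS}(P)$ in (\ref{eq:RE SPS}) is a union over a subset of those defining $\mathcal{C}_{\rm R-E}^{\rm DPS}(P)$. The real content is the reverse inclusion.

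For the reverse inclusion I would fix an arbitrary $\mv{\rho}=[\rho_1,\ldots,\rho_N]^T$ and set $\bar\rho:=\frac{1}{N}\sum_{k=1}^N\rho_k\in[0,1]$. For each fixed $\mv{\rho}$ the region contributed to (\ref{eq:RE DPS}) is the box $\{(R,Q): R\le R_0,\ Q\le Q_0\}$ with $Q_0=\frac{1}{N}\sum_k\rho_k\zeta hP$ and $R_0=\frac{1}{N}\sum_k\log_2(1+\tau(\rho_k))$, where $\tau(\cdot)$ is as in (\ref{eq:SNR}); the SPS scheme with constant ratio $\bar\rho$ contributes to (\ref{eq:RE SPS}) the box $\{(R,Q): R\le \log_2(1+\tau(\bar\rho)),\ Q\le\bar\rho\,\zeta hP\}$. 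The energy coordinates coincide exactly, since $\frac{1}{N}\sum_k\rho_k\zeta hP=\bar\rho\,\zeta hP$. Hence it suffices to show
\[
\frac{1}{N}\sum_{k=1}^N\log_2\!\big(1+\tau(\rho_k)\big)\ \le\ \log_2\!\big(1+\tau(\bar\rho)\big),
\]
because then the $\bar\rho$-box dominates the $\mv{\rho}$-box coordinatewise, so the $\mv{\rho}$-box is contained in it; taking the union over all $\mv{\rho}$ then yields $\mathcal{C}_{\rm R-E}^{\rm DPS}(P)\subseteq\mathcal{C}_{\rm R-E}^{\rm SPS}(P)$.

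The displayed inequality follows from Jensen's inequality with uniform weights $1/N$ once I verify that $f(\rho):=\log_2(1+\tau(\rho))$ is concave on $[0,1]$. I would do this by substituting $u=1-\rho$ (an affine substitution, which preserves concavity) and writing
\[
f=\log_2\!\big(u(\sigma^2_{\rm A}+hP)+\sigma^2_{\rm cov}\big)-\log_2\!\big(u\,\sigma^2_{\rm A}+\sigma^2_{\rm cov}\big).
\]
Differentiating twice in $u$ and writing $a=\sigma^2_{\rm A}+hP$, $b=\sigma^2_{\rm A}$, $c=\sigma^2_{\rm cov}$, concavity reduces to $b^2/(bu+c)^2\le a^2/(au+c)^2$ for $u\in[0,1]$; since all the terms are positive, this is equivalent to $b(au+c)\le a(bu+c)$, i.e.\ to $bc\le ac$, i.e.\ to $b\le a$, which holds because $hP\ge0$. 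This completes the argument.

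I expect the concavity check to be the only genuine step; everything else is bookkeeping about unions of axis-aligned boxes. The one place that needs care is that passing from $b^2/(bu+c)^2\le a^2/(au+c)^2$ to $b/(bu+c)\le a/(au+c)$ uses positivity of $\sigma^2_{\rm A}$, $\sigma^2_{\rm cov}$ and of $u$, which is why we keep $u\in[0,1]$ (equivalently $\rho\in[0,1]$) throughout.
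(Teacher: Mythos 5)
Your proof is correct and follows essentially the same route as the paper's: the trivial inclusion $\mathcal{C}_{\rm R-E}^{\rm SPS}\subseteq\mathcal{C}_{\rm R-E}^{\rm DPS}$, then Jensen's inequality applied to the concave rate function $f(\rho)$ with the averaged ratio $\bar\rho=\frac{1}{N}\sum_k\rho_k$, which matches the energy coordinate exactly. The only difference is that you explicitly verify the concavity of $f$ (which the paper dismisses as ``easy to verify''), and your verification is sound.
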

\begin{proof}
Please refer to Appendix \ref{appendix:proof prop1}.
\end{proof}

From Proposition \ref{proposition:1}, it suffices for us to consider
the SPS scheme for the optimal R-E tradeoff in the case of separated
receivers. In particular, if $\sigma^2_{\rm A} \ll \sigma^2_{\rm cov}$, i.e.,
the processing noise is dominant over the antenna noise, from
(\ref{eq:SNR}) the SNR at the information receiver
$\tau(\rho)\rightarrow (1-\rho)hP/\sigma^2_{\rm cov}$.
In the other extreme case with
$\sigma^2_{\rm A} \gg \sigma^2_{\rm cov}$, from (\ref{eq:SNR}) we have
$\tau(\rho)\rightarrow hP/\sigma^2_{\rm A}$, which is independent of
$\rho$. Thus, the optimal rate-energy tradeoff is achieved when
infinitesimally small power is split to the information receiver,
i.e., $\rho\rightarrow 1$. In this case, it can be shown that when $\zeta=1$,
$\mathcal{C}_{\rm R-E}^{\rm SPS}(P)\rightarrow \mathcal{C}_{\rm
R-E}^{\rm UB}(P)$, which is the R-E tradeoff outer bound given in
(\ref{eq:RE upper}).

\begin{figure}
\centering
 \epsfxsize=0.7\linewidth
    \includegraphics[width=9cm]{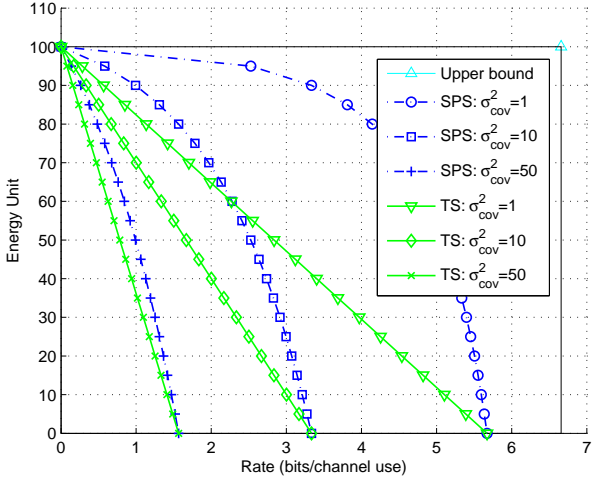}
\caption{Rate-energy tradeoff for TS vs. SPS based separated receiver with $h=1, P=100, \zeta=1$ and $\sigma^2_{\rm A}=1$.}
\label{fig:RE-Sep}
\end{figure}

Fig. \ref{fig:RE-Sep} shows the achievable
R-E regions under different noise power setups for the
separated information and energy receiver (SepRx).
It is assumed that $h=1$, $P=100$, $\zeta=1$, and the antenna noise
power is set to be $\Va=1$.
With normalization, for convenience
we denote the information rate and harvested energy in terms of
bits/channel use and energy unit, respectively. In Fig. \ref{fig:RE-Sep}, it is
observed that for SepRx, the SPS scheme always achieves larger R-E
pairs than the TS scheme for different values of the processing
(RF band to baseband conversion) noise power $\sigma^2_{\rm cov}$.
Moreover, as $\sigma^2_{\rm cov}$ increases, the gap between
$\mathcal{C}_{\rm R-E}^{\rm TS}(P)$ and $\mathcal{C}_{\rm R-E}^{\rm
SPS}(P)$ shrinks, while as $\sigma^2_{\rm cov}$ decreases, the
achievable R-E region with SPS enlarges and will eventually approach to
the R-E region upper bound given in (\ref{eq:RE upper}) when
$\sigma^2_{\rm cov}\rightarrow 0$.

\section{Rate-Energy Tradeoff for Integrated Information and Energy Receiver} \label{sec:integrated receivers}

In this section, we study the rate-energy performance for the
integrated information and energy receiver shown in Fig.
\ref{fig:IntRx}.
In the integrated receiver, due to the RF to baseband conversion by the rectifier,
we shall see that the equivalent baseband channel is nonlinear,
as opposed to that of the separated receiver where the channel is linear.

From (\ref{eq:idc}), for convenience we re-express
$i_{\rm DC}(t)$ as follows:
\begin{align}\label{eq:idc-simp-pre}
    i_{\rm DC}(t)= \left| \sqrt{hP} A(t) e^{j\left(\theta+\phi(t)\right)} +\Nax  \right|^2   + n_{\rm rec}(t).
\end{align}
Since planar rotation does not change the statistics of $\Nax$, (\ref{eq:idc-simp-pre}) can be equivalently written as
\begin{align}\label{eq:idc-simp}
    i_{\rm DC}(t)= \left| \sqrt{hP} A(t) +\Nax  \right|^2   + n_{\rm rec}(t).
\end{align}

As shown in Fig. \ref{fig:IntRx}, after the noiseless power splitter
and ADC, the output $\Yonew,k=1,\ldots,N$, is given by
\begin{align}\label{eq:Int-yout}
    \Yonew = \left(1-\rho_k\right) \left(\left| \sqrt{hP} A[k]  +\tilde n_{\rm A}[k]\right|^2 + n_{\rm rec}[k] \right).
\end{align}
In the above it is worth noting that the average SNR
at any $k$ is independent of $\rho_k$ provided that $\rho_k<1$.
Thus, to minimize the power split for information decoding (or
maximize the power split for energy harvesting), we should let
$\rho_k\rightarrow 1, \forall k$, i.e., splitting infinitesimally
small power to the information receiver all the time. Thereby, DPS
becomes an equivalent SPS with $\rho\rightarrow 1$ in the case of
integrated receiver.

With $\rho_k$'s all equal to 1 in (\ref{eq:Int-yout}), the
equivalent discrete-time memoryless channel for the information decoder is modeled as
\begin{align}\label{eq:new model}
    Y = \left| \sqrt{hPX} +Z_2 \right|^2 + Z_1
\end{align}
where $X$ denotes the signal power, which is the nonnegative channel input; $Y$ denotes the channel output;
$Z_2\sim \mathcal{CN}(0,\sigma^2_{\rm A})$ denotes the antenna noise; and
$Z_1\sim \mathcal{N}(0,\sigma^2_{\rm rec})$ denotes the rectifier noise.
It is worth noting that for the channel (\ref{eq:new model}) information is encoded
in the power (amplitude) of the transmitted signal $x(t)$, rather than the phase of $x(t)$.
The channel in (\ref{eq:new model}) is nonlinear and thus it is
challenging to determine its capacity $C_{\rm NL}$ and corresponding optimal input distribution subject to
$X\geq0$ and $\mathbb{E}[X]\leq1$, where $X$ is real.
Similar to the case of separated receiver, we consider the following two
special noise power setups:
\begin{itemize}
    \item Case 1 (Negligible Antenna Noise) with $\Va\rightarrow 0$:
    In practice, this case may be applicable when the antenna noise power is much smaller than the rectifier noise power,
    thus the antenna noise can be omitted.
    With $\Va\rightarrow 0$, we have $Z_2\rightarrow 0$. Thus, the channel in (\ref{eq:new model}) becomes
        \begin{equation}\label{eq:sim model-Na=0}
            Y=hP X+ Z_1
        \end{equation}
        where $X\geq0$ and real-valued, which is known as the {\em optical intensity channel}. It is shown in \cite{Faycal}
        that the optimal input distribution to this channel is discrete.
        According to \cite{Lapidoth-Moser}, the capacity $C_1$ for the channel (\ref{eq:sim model-Na=0})
        is upper-bounded by

        \begin{small}\vspace{-0.1in}
        \begin{align}\label{eq:C1-ub}
           &  C_1^{\rm ub} = \log_2\left(\beta e^{-\frac{\delta^2}{2\sigma^2_{\rm rec}}}+\sqrt{2\pi}\sigma_{\rm rec}\mathcal{Q}\left(\frac{\delta}{\sigma_{\rm rec}}\right)\right)       \nonumber\\
             &  + \left( \frac{1}{2}\mathcal{Q}\left(\frac{\delta}{\sigma_{\rm rec}}\right)
                + \frac{1}{\beta}\left(\delta+hP+\frac{\sigma_{\rm rec}e^{-\frac{\delta^2}{2\sigma^2_{\rm rec}}}}{\sqrt{2\pi}}\right) \right) \log_2 e     \nonumber\\
             &  + \left(\frac{\delta e^{-\frac{\delta^2}{2\sigma^2_{\rm rec}}}}{2\sqrt{2\pi}\sigma_{\rm rec}}
                + \frac{\delta^2}{2\sigma^2_{\rm rec}} \left( 1-\mathcal{Q}\left(\frac{\delta+hP}{\sigma_{\rm rec}}\right)\right)\right) \log_2 e \nonumber\\
             &  - \frac{1}{2}\log_2 2\pi e \sigma^2_{\rm rec}
        \end{align}
        \end{small}where $\mathcal{Q}(\cdot)=\frac{1}{\sqrt{2\pi}}\int_x^{\infty}e^{-\frac{t^2}{2}}d t$ denotes the $\mathcal{Q}$-function, and $\beta>0$, $\delta\geq0$ are free parameters. The details of choice for $\beta$ and $\delta$ are provided in \cite{Lapidoth-Moser}, and thus are omitted in this paper for brevity.
        Moreover, the asymptotic capacity at high power ($P\rightarrow \infty$) is given by \cite{Lapidoth-Moser}
        \begin{equation}\label{eq:C1-inf}
            C_1^{\infty} = \log_2 \frac{hP}{\sigma_{\rm rec}} + \frac{1}{2}\log_2 \frac{e}{2\pi}.
        \end{equation}

    \item Case 2 (Negligible Rectifier Noise) with $\Srec \rightarrow 0$:
    This case is applicable when the antenna noise power is much greater than the rectifier noise power;
    thus, the rectifier noise can be omitted. With $\Srec\rightarrow 0$, we have $Z_1\rightarrow 0$.
    The channel in (\ref{eq:new model}) is then simplified as
        \begin{equation}\label{eq:sim model-Np=0}
            Y=  \left| \sqrt{hPX} +Z_2 \right|^2
        \end{equation}
        which is equivalent to the {\em noncoherent AWGN channel}. It is shown in \cite{Shamai} that the optimal input distribution to this channel is discrete and possesses an infinite number of mass points.
        The capacity $C_2$ for the channel (\ref{eq:sim model-Np=0}) is upper-bounded by \cite{Shamai}
        \begin{equation}\label{eq:C2-ub}
            C_2^{\rm ub} = \frac{1}{2}\log_2\left( 1+ \frac{hP}{\sigma^2_{\rm A}} \right)
            +\frac{1}{2}\left( \log_2\frac{2\pi}{e}-C_E\log_2 e \right)
        \end{equation}
        where $C_E$ is Euler's constant.
        Moreover, the asymptotic capacity at high power ($P\rightarrow \infty$) is given by \cite{Shamai,Lapidoth}
        \begin{equation}\label{eq:C2-inf}
            C_2^{\infty} = \frac{1}{2}\log_2 \left( 1+\frac{hP}{2\sigma^2_{\rm A}} \right)
        \end{equation}
        which is achieved by choosing $X$ as central chi-square distribution with one degree of freedom\footnote{In this case, the input amplitude is distributed as the positive normal distribution, with probability density function $f_{\rm A}(a)=\sqrt{\frac{2}{\pi}}e^{-\frac{a^2}{2}}$.}.
\end{itemize}

In general, the capacity $C_{\rm NL}$ of the channel given in (\ref{eq:new model})
can be upper-bounded by
\begin{equation}\label{}
    C_{\rm NL}^{\rm ub} = \min\{C_1^{\rm ub}, C_2^{\rm ub}\}
\end{equation}
and the capacity lower bound $C_{\rm NL}^{\rm lb}$ for the channel (\ref{eq:new model})
can be computed by the mutual information obtained from any input distribution
satisfying the constraint $X\geq0$ and $\mathbb{E}[X]\leq1$.
It is worth noting that at high power ($P\rightarrow\infty$) from (\ref{eq:C1-inf}) and (\ref{eq:C2-inf}),
$C_1^{\infty}$ grows like $\log_2 P$; while $C_2^{\infty}$ grows like $\frac{1}{2}\log_2 P$.
Thus the channel (\ref{eq:sim model-Np=0}) provides a tighter upper bound for the asymptotic capacity of the
channel (\ref{eq:new model}) than the channel (\ref{eq:sim model-Na=0}) at high SNR.

\begin{figure}
\centering
 \epsfxsize=0.7\linewidth
    \includegraphics[width=9cm]{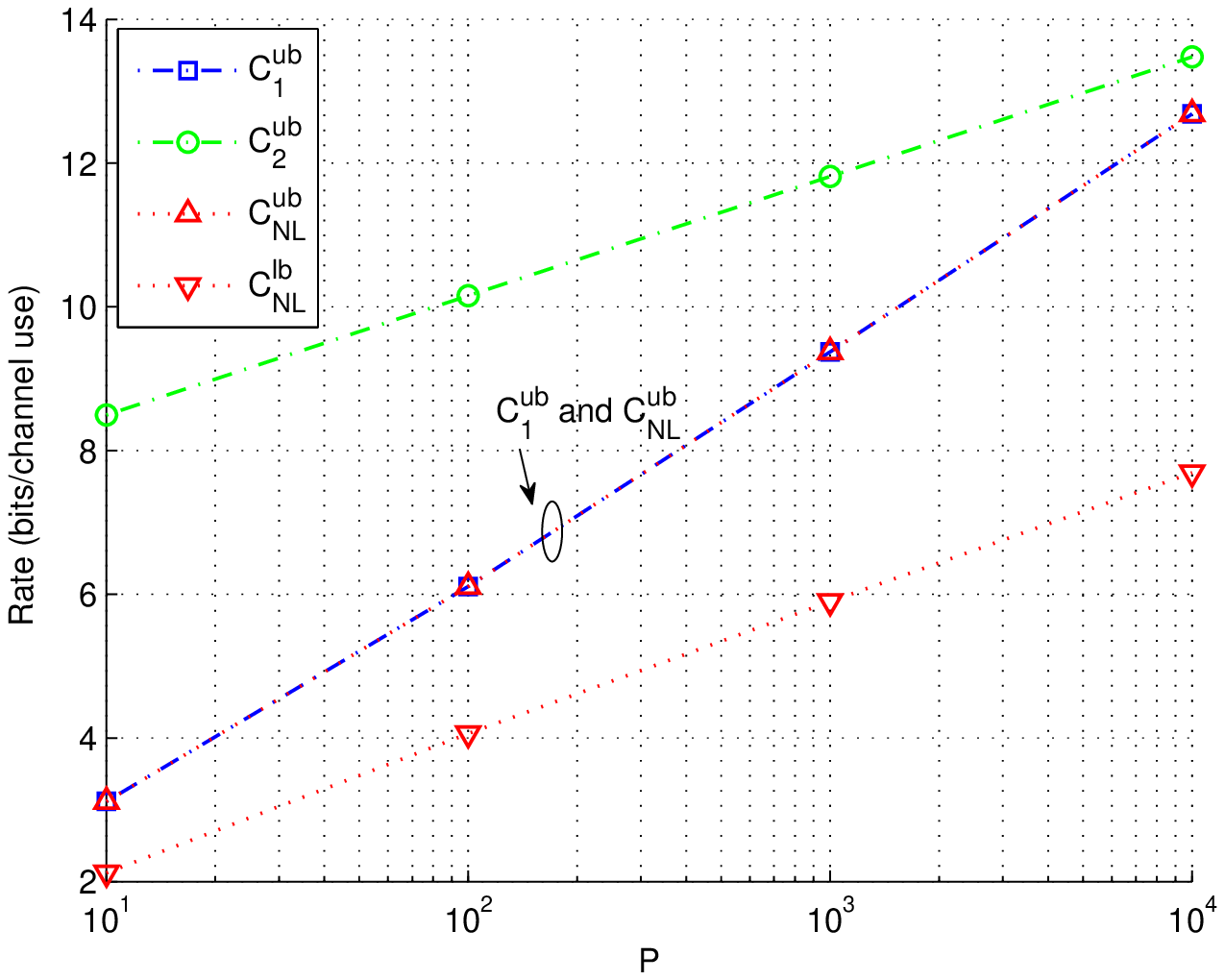}
\caption{Capacity bounds for the channels (\ref{eq:new model}), (\ref{eq:sim model-Na=0}) and (\ref{eq:sim model-Np=0}) with $h=1, \Va=10^{-4}$ and $\Srec=1$.}
\label{fig:Capacity}
\end{figure}

Fig. \ref{fig:Capacity} shows the capacity bounds for the above three
channels (\ref{eq:new model}), (\ref{eq:sim model-Na=0}) and (\ref{eq:sim model-Np=0}).
It is assumed that $h=1, \Va=10^{-4}$ and $\Srec=1$. The capacity lower bound $C_{\rm NL}^{\rm lb}$
for the channel given in (\ref{eq:new model}) is computed by
assuming the input (power) distribution is a central chi-square distribution with one degree of freedom.
We shall use this lower bound as the achievable rate for the integrated receiver in  the subsequent numerical results.
It is observed that in this case with dominant rectifier noise, the capacity upper bound $C_1^{\rm ub}$ in (\ref{eq:C1-ub}) is tighter than $C_2^{\rm ub}$ in (\ref{eq:C2-ub}). It is also observed that the gap between the capacity upper and lower bounds, namely $C_{\rm NL}^{\rm ub}$ and $C_{\rm NL}^{\rm lb}$, is still notably large under this setup, which can be further reduced by optimizing the input distribution.

To summarize, the achievable R-E region for the case of integrated
receivers by SPS with $\rho\rightarrow 1$ is given by
\begin{align}\label{eq:intRx R-E}
    \mathcal{C}_{\rm R-E}^{\rm SPS}(P)\triangleq\left\{(R,Q): R\leq
            C_{\rm NL}(P), Q\leq \zeta hP
            \right\}
\end{align}
where $C_{\rm NL}(P)$ denotes the capacity of the nonlinear (NL)
channel given in (\ref{eq:new model}) subject to
$X\geq0$ and $\mathbb{E}[X]\leq1$.

\begin{remark}\label{remark:adc noise}
We have characterized the rate-energy performance for the integrated receiver assuming
an ideal ADC with zero quantization noise. Now we extend our results to the case of nonzero quantization noise $n_{\rm ADC}(t)$.
It is assumed that $n_{\rm ADC}(t)\sim \mathcal{N}(0,\sigma^2_{\rm ADC})$ for
the integrated receiver \cite{Carbone,Ruscak}.
With nonzero ADC noise, (\ref{eq:Int-yout}) is modified as
\begin{align}\label{eq:Int-yout-adc}
    \Yonew = & \left(1-\rho_k\right) \left(\left| \sqrt{hP} A[k]  +\tilde n_{\rm A}[k]\right|^2 + n_{\rm rec}[k] \right) \nonumber\\
    &  + n_{\rm ADC}[k].
\end{align}
Thus, for given $k$ the equivalent channel in (\ref{eq:new model}) still holds, where $Z_1\sim\mathcal{N}\left(0,\sigma^2_{\rm rec}+\frac{\sigma^2_{\rm ADC}}{(1-\rho_k)^2}\right)$ denotes the equivalent processing noise. It is worth noting that
the equivalent processing noise power is a function of the power splitting ratio $\rho_k$; thus, the capacity in channel (\ref{eq:new model}) is also a function of $\rho_k$.
The achievable R-E region for the integrated receiver by DPS is thus given by
\begin{align}\label{eq:RE-Int-adc}
    \mathcal{C}_{\rm R-E}^{\rm DPS}(P)\triangleq\bigcup\limits_{\mv{\rho}}\bigg\{
    	& (R,Q): R\leq\frac{1}{N}\sum\limits_{k=1}^{N}C_{\rm NL}(P,\rho_k), 	\nonumber\\
           &  \left. Q\leq \frac{1}{N}\sum\limits_{k=1}^{N}\rho_k\zeta hP \right\}.
\end{align}
For the separated receiver, the results in Section \ref{sec:separated receivers}
can be easily extended to the case with nonzero ADC noise by adding the ADC
noise power to the total processing noise power.
\end{remark}

Figs. \ref{fig:RE-Int} and \ref{fig:RE-Int-adc} show the achievable
R-E regions under different noise power setups for both cases of
SepRx and IntRx. For both figures,
it is assumed that $h=1, P=100, \zeta=0.6$, and $\sigma^2_{\rm A}=1$.
In Fig. \ref{fig:RE-Int}, it is assumed that $\sigma^2_{\rm ADC}=0$.
In Fig. \ref{fig:RE-Int-adc}, it is assumed that $\sigma^2_{\rm ADC}=1$,
and $\rho_k=\rho, \forall k$ in (\ref{eq:RE-Int-adc}) with $0\leq \rho\leq1$.
Note that in practice, the degradation of ADC noise is usually
modeled by a so-called signal-to-quantization-noise ratio (SQNR), approximately
given by $6K$ dB, where $K$ is the number of quantization bits. Here,
by assuming $P=100$ and $\sigma^2_{\rm ADC}=1$, the SQNR equals to 20dB,
which implies $K \approx 3.3$bits. It follows that the number of quantization
levels is approximately 10.
In Figs. \ref{fig:RE-Int} and \ref{fig:RE-Int-adc},
the achievable rates for
IntRx are computed as the capacity lower bound for the channel given
in (\ref{eq:new model}) assuming the input as central chi-square
distribution with one degree of freedom.

As shown in Fig. \ref{fig:RE-Int}, the achievable R-E regions
for IntRx with zero ADC noise are marked by boxes as given in (\ref{eq:intRx R-E}).
In addition, when the processing
noise power ($\Vcov$ for SepRx and $\Srec$ for IntRx)
equals to the antenna noise power, i.e.,
$\Va=\Vcov=\Srec=1$, the achievable rate for IntRx is
notably lower than that for SepRx, due to the use of noncoherent
(energy) modulation by IntRx as compared to the use of coherent
modulation by SepRx. However, when the processing noise power is much greater
than the antenna noise power (as in most practical systems),
the achievable R-E region of IntRx becomes superior compared to that of SepRx with the same
processing noise power, i.e., $\Vcov=\Srec=100$.
This is due to the fact that for IntRx, the
processing (rectifier) noise incurs prior to the power splitter and
thus only infinitesimally small power is required to be split by the
power splitter to implement the energy detection for information
decoding (cf. (\ref{eq:new model})), while for SepRx, more
power needs to be split to the information decoder to
compensate for the processing (RF band to baseband conversion) noise
that incurs after the power splitter.
Moreover, in Fig. \ref{fig:RE-Int} it is observed that IntRx is more
suitable than SepRx when more wireless power is desired.

In Fig. \ref{fig:RE-Int-adc}, it is observed that the achievable R-E regions
for IntRx with nonzero ADC noise are no longer boxes.
Comparing Fig. \ref{fig:RE-Int-adc} with Fig. \ref{fig:RE-Int}, it is observed
that the achievable rate by IntRx with nonzero ADC noise is less than that by
IntRx with zero ADC noise, especially when more harvested energy is desired.

\begin{figure}
\centering
 \epsfxsize=0.7\linewidth
    \includegraphics[width=9cm]{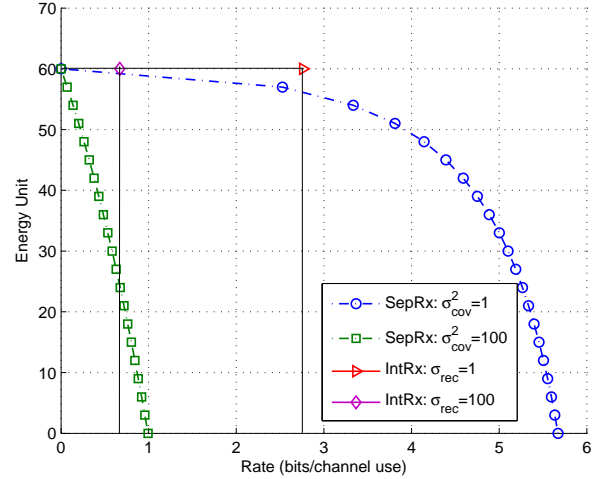}
\caption{Rate-energy tradeoff for separated vs. integrated receivers with $h=1, P=100, \zeta=0.6$, $\sigma^2_{\rm A}=1$ and $\sigma^2_{\rm ADC}=0$.}
\label{fig:RE-Int}
\end{figure}

\begin{figure}
\centering
 \epsfxsize=0.7\linewidth
    \includegraphics[width=9cm]{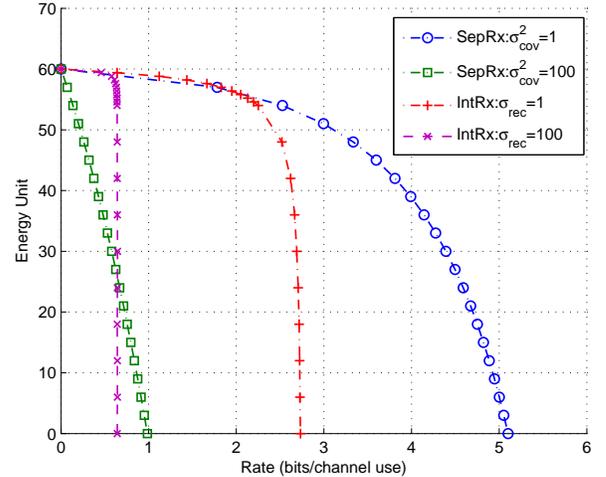}
\caption{Rate-energy tradeoff for separated vs. integrated receivers with $h=1, P=100, \zeta=0.6$ and $\sigma^2_{\rm A}=\sigma^2_{\rm ADC}=1$.}
\label{fig:RE-Int-adc}
\end{figure}

\section{Rate-Energy Tradeoff with Receiver Circuit Power Consumption}\label{sec:circuit power}
In Sections \ref{sec:separated receivers} and \ref{sec:integrated receivers},
the harvested energy is characterized as the energy harvested by the energy receiver
without consideration of power consumption by the receiver circuits.
For energy receiver, there is no energy consumption since both the Schottky diode and LPF are
passive devices\footnote{In practice, some RF energy harvesting systems have additional control circuits
which consume power, however, this power consumption has been included in the
conversion efficiency $\zeta$.}.
However, for information receiver, some amount of power will be consumed to supply the
information decoding circuits. In particular, for the separated receiver shown in Fig. \ref{fig:SepRx},
the circuit power consumed by information decoding, denoted by $\Pcs$, is given by
$\Pcs=P_{\rm m}+P_{\rm ADC}$, where $P_{\rm m}$ and $P_{\rm ADC}$ denote the power consumed by the RF band mixer and the ADC, respectively. For the integrated receiver shown in Fig. \ref{fig:IntRx}, however,
the circuit power consumed by information decoding, denoted by $\Pci$, is only given by
$\Pci=P_{\rm ADC}$.\footnote{Here $\Pcs$ and $\Pci$ are defined according to the proposed architectures
in Fig. \ref{fig:SepRx} and Fig. \ref{fig:IntRx}, respectively. In practice, the information decoding circuits may contain
additional components, such as a low noise amplifier (LNA) in the separated receiver. In general, the power consumed by the additional components can be added to $\Pcs$ or $\Pci$.}
Note that in general $\Pcs$ will be much greater than $\Pci$, since the RF band mixer consumes
comparable amount of power as compared to the ADC.
Thus the {\it net energy} stored in the battery will
be the harvested energy subtracted by that consumed by information decoding circuits.
In this section, we study the rate-energy tradeoff for both separated and integrated
receivers with receiver circuit power consumption taken into account.

\subsection{Separated Receiver with $\Pcs>0$}
For the separated receiver shown in Fig. \ref{fig:SepRx},
by modifying (\ref{eq:RE DPS}) to account for the circuit power $\Pcs$,
the achievable R-E region for the DPS
scheme is given by

\begin{footnotesize}\vspace{-0.2in}
\begin{align}\label{eq:RE DPS-Pmix}
    \mathcal{C}_{\rm R-E}^{\rm DPS'} &
       (P)\triangleq \bigcup\limits_{\mv{\rho}}\left\{
      (R,Q): 0\leq Q\leq\frac{1}{N}\left(\sum\limits_{k=1}^{N}\rho_k \zeta hP -\sum\limits_{k=\alpha N+1}^{N}\Pcs\right),\right.\nonumber \\
     & \left. R\leq \frac{1}{N}\sum\limits_{k=\alpha N+1}^{N} \log_2\left(1+\frac{(1-\rho_k)hP}{(1-\rho_k)\sigma^2_{\rm A}+\sigma^2_{\rm cov}} \right) \right\}.
\end{align}
\end{footnotesize}Next, we address one special case of DPS, i.e., the OPS scheme.
Substituting (\ref{eq:rho-OPS}) into (\ref{eq:RE DPS-Pmix}), the achievable R-E region for the OPS scheme is given by
\begin{align}\label{eq:RE OPS-Pmix}
  &  \mathcal{C}_{\rm R-E}^{\rm OPS'}
       (P)\triangleq \bigcup\limits_{\alpha,\rho}\bigg\{
        (R,Q): 0\leq Q\leq \alpha\zeta hP + (1-\alpha)\rho\zeta hP  \nonumber \\
       & \left. -(1-\alpha)\Pcs, R\leq (1-\alpha)\log_2\left(1+\frac{(1-\rho)hP}{(1-\rho)\sigma^2_{\rm A}+\sigma^2_{\rm cov}} \right) \right\}.
\end{align}

\begin{proposition}\label{proposition:2}
For the separated information and energy receiver with $\Pcs>0$, the OPS scheme
is the optimal DPS scheme, i.e., $\mathcal{C}_{\rm R-E}^{\rm
DPS'}(P)=\mathcal{C}_{\rm R-E}^{\rm OPS'}(P),P\geq 0$.
\end{proposition}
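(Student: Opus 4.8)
The plan is to prove the two set inclusions separately. The inclusion $\mathcal{C}_{\rm R-E}^{\rm OPS'}(P)\subseteq\mathcal{C}_{\rm R-E}^{\rm DPS'}(P)$ is immediate: substituting an OPS vector (\ref{eq:rho-OPS}) into (\ref{eq:RE DPS-Pmix}) reproduces exactly the rectangle appearing in (\ref{eq:RE OPS-Pmix}), so (\ref{eq:RE OPS-Pmix}) is a union over a subfamily of the power splitting vectors used in (\ref{eq:RE DPS-Pmix}). The substance is the reverse inclusion, which I would obtain by showing that for every admissible configuration $(\alpha,\mv{\rho})$ the rectangle $\{(R,Q):0\le R\le R_{\max}(\mv{\rho}),\ 0\le Q\le Q_{\max}(\mv{\rho})\}$ contributed to the union (\ref{eq:RE DPS-Pmix}) is contained in a rectangle contributed to (\ref{eq:RE OPS-Pmix}) by some OPS pair; since each region is a union of such rectangles, this gives $\mathcal{C}_{\rm R-E}^{\rm DPS'}(P)\subseteq\mathcal{C}_{\rm R-E}^{\rm OPS'}(P)$, hence equality.

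First I would normalize a generic DPS configuration in two steps. (i) Every off-mode symbol ($k\le\alpha N$) may be taken to have $\rho_k=1$: such indices do not appear in the rate sum $\sum_{k=\alpha N+1}^{N}(\cdot)$, while the energy term $\rho_k\zeta hP$ is maximized at $\rho_k=1$, so raising any off-mode $\rho_k$ to $1$ only enlarges the rectangle. (ii) With the off-mode entries fixed at $1$, set $\bar\rho\triangleq\frac{1}{(1-\alpha)N}\sum_{k=\alpha N+1}^{N}\rho_k$ and compare $(\alpha,\mv{\rho})$ with the OPS pair $(\alpha,\bar\rho)$. The energy ceiling is unchanged, since $\frac{1}{N}\sum_{k=1}^{N}\rho_k=\alpha+(1-\alpha)\bar\rho$ and both ceilings then reduce to $\zeta hP\big(\alpha+(1-\alpha)\bar\rho\big)-(1-\alpha)\Pcs$. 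The OPS rate ceiling equals $(1-\alpha)\log_2\!\big(1+\tfrac{(1-\bar\rho)hP}{(1-\bar\rho)\Va+\Vcov}\big)$, whereas the DPS rate ceiling equals $(1-\alpha)$ times the average over the on-mode symbols of $\log_2\!\big(1+\tfrac{(1-\rho_k)hP}{(1-\rho_k)\Va+\Vcov}\big)$; so the claim that the former is no smaller is exactly Jensen's inequality applied over the on-mode symbols, provided $\rho\mapsto\log_2\!\big(1+\tfrac{(1-\rho)hP}{(1-\rho)\Va+\Vcov}\big)$ is concave on $[0,1]$.

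The crux is therefore this concavity, which is the same fact that underlies Proposition~\ref{proposition:1}. I would verify it through the affine substitution $u=1-\rho$, which turns the per-symbol rate into $\log_2\!\big(u(\Va+hP)+\Vcov\big)-\log_2\!\big(u\Va+\Vcov\big)$; differentiating twice in $u$ yields a second derivative proportional to $\tfrac{\Va^2}{(u\Va+\Vcov)^2}-\tfrac{(\Va+hP)^2}{(u(\Va+hP)+\Vcov)^2}$, which is nonpositive because $c\mapsto c/(cu+\Vcov)$ is nondecreasing for $c,u\ge0$ and $\Va\le\Va+hP$. Hence the per-symbol rate is concave in $u$, and since $u$ is affine in $\rho$, concave in $\rho$, so Jensen applies. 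Everything else is bookkeeping with the sums in (\ref{eq:RE DPS-Pmix}) and (\ref{eq:RE OPS-Pmix}), so I expect this concavity check to be the only real obstacle.

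Finally I would treat the degenerate case $\bar\rho=1$, where every on-mode $\rho_k$ was already $1$, the DPS rate ceiling is $0$, and the OPS constraint $0\le\rho<1$ forbids $\rho=1$. Such a configuration burns circuit power $(1-\alpha)\Pcs$ while decoding nothing, and is therefore dominated by the OPS pair with $\alpha'=1$, whose rate ceiling is $0$ and whose energy ceiling $\zeta hP$ is at least $\zeta hP-(1-\alpha)\Pcs$. Combining the two normalizations with this edge case shows every rectangle in (\ref{eq:RE DPS-Pmix}) lies inside one in (\ref{eq:RE OPS-Pmix}), which establishes $\mathcal{C}_{\rm R-E}^{\rm DPS'}(P)\subseteq\mathcal{C}_{\rm R-E}^{\rm OPS'}(P)$; together with the trivial inclusion, the two regions coincide.
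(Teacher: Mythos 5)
Your proof is correct and follows essentially the same route as the paper's: the trivial inclusion in one direction, then for the reverse inclusion setting $\rho_k=1$ on the off-mode symbols and applying Jensen's inequality to the on-mode average $\bar\rho$ using concavity of the per-symbol rate in $\rho$. You additionally spell out the concavity check (which the paper delegates to an ``easy to verify'' remark in the proof of Proposition \ref{proposition:1}) and handle the $\bar\rho=1$ edge case excluded by the OPS definition, both of which the paper leaves implicit; neither changes the substance of the argument.
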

\begin{proof}
Please refer to Appendix \ref{appendix:proof prop2}.
\end{proof}

From Proposition \ref{proposition:2}, it suffices to consider
the OPS scheme for the optimal R-E tradeoff in the case of separated
receivers. Unlike the case of $\Pcs=0$,
where the boundary of $\mathcal{C}_{\rm R-E}^{\rm DPS}=\mathcal{C}_{\rm R-E}^{\rm SPS}$
is achieved as $\rho$ sweeps from 0 to 1,
the optimal power splitting pairs $(\alpha^{\ast},\rho^{\ast})$
that achieve the boundary of $\mathcal{C}_{\rm R-E}^{\rm DPS'}=\mathcal{C}_{\rm R-E}^{\rm OPS'}$
has to be determined. We thus consider the following optimization problem:
\begin{align}
\mathrm{(P0)}:~\mathop{\mathtt{max.}}_{\alpha,\rho} & ~~ R=(1-\alpha)\log_2 \left(1+\frac{(1-\rho)hP}{(1-\rho)\Va+\Vcov}\right) \nonumber \\
\mathtt{s.t.} & ~~ \alpha\zeta hP + (1-\alpha)\rho\zeta hP -(1-\alpha)\Pcs \geq Q,  \nonumber \\
              & ~~ 0\leq\alpha\leq1, ~ 0\leq\rho\leq1, \nonumber
\end{align}

Problem (P0) is feasible if and only if $Q\leq \zeta hP$. It is easy to verify that $(R,Q)=(0,\zeta hP)$
is achieved by $\alpha=1$. Next, we consider Problem (P0) for given $Q\in[0,\zeta hP)$ and $0\leq\alpha<1$.
The optimal solution of (P0) is obtained with the first constraint strictly equal,
otherwise we can always decrease $\alpha$ or $\rho$ to obtain a larger rate $R$.
Thus the boundary points $(R,Q)$ satisfy the following two equations,
\begin{align}
    Q&= \alpha\zeta hP + (1-\alpha)\rho\zeta hP -(1-\alpha)\Pcs,     \label{eq:Q-bound}\\
    R&= (1-\alpha)\log_2\left(1+\frac{(1-\rho)hP}{(1-\rho)\sigma^2_{\rm A}+\sigma^2_{\rm cov}} \right). \label{eq:R-bound}
\end{align}
From (\ref{eq:Q-bound}), we have
\begin{equation}\label{eq:rho-bound}
    \rho = \frac{Q-\alpha\zeta hP+(1-\alpha)\Pcs}{(1-\alpha)\zeta hP}.
\end{equation}
From (\ref{eq:rho-bound}), we have $\alpha\in\left[\max\{\frac{Q+\Pcs-\zeta hP}{\Pcs},0\},\frac{Q+\Pcs}{\zeta hP+\Pcs}\right]$ such that $0\leq\rho\leq1$.
Substituting (\ref{eq:rho-bound}) to (\ref{eq:R-bound}), we have
\begin{align}\label{eq:R-temp}
    R = (1-\alpha)\log_2\left(  1+
          \frac {\frac{\zeta hP-Q-(1-\alpha)\Pcs}{\zeta}} {\frac{\zeta hP-Q-(1-\alpha)\Pcs}{\zeta hP} \Va+\Vcov (1-\alpha)}
        \right).
\end{align}
From (\ref{eq:R-temp}), $R$ is a function of $\alpha$ with fixed $Q$. For convenience, we rewrite (\ref{eq:R-temp}) as follows:
\begin{equation}\label{eq:R(s)}
    R(s) = s\log_2\left( 1+\frac{cs+d}{as+b} \right)
\end{equation}
where $s=1-\alpha$, $a=\Vcov-\frac{\Va\Pcs}{\zeta hP}$, $b=\Va\left(1-\frac{Q}{\zeta hP}\right)>0$, $c=-\frac{\Pcs}{\zeta}<0$
and $d=hP\left(1-\frac{Q}{\zeta hP}\right)>0$.
It is worth noting that $s\in\left[\frac{\zeta hP-Q}{\zeta hP+\Pcs},\min\{\frac{\zeta hP-Q}{\Pcs},1\}\right]$, or equivalently, $s\in\left[\frac{d}{hP-c},\min\{-\frac{d}{c},1\}\right]$, since $\alpha\in\left[\max\{\frac{Q+\Pcs-\zeta hP}{\Pcs},0\},\frac{Q+\Pcs}{\zeta hP+\Pcs}\right]$.
The following lemma describes the behavior of $R(s)$ in terms of $s$, which is important for determining the boundary points $(R,Q)$.

\begin{lemma}\label{lemma:1}
With $Q\in [0,\zeta hP)$, $R(s)$ is concave in $s\in[\frac{d}{hP-c},\min{\{-\frac{d}{c},1\}}]$.
\end{lemma}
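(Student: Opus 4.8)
\emph{Proof plan.} Since $R(s)=\tfrac{1}{\ln 2}\,\phi(s)$ with $\phi(s)=s\ln N(s)-s\ln D(s)$, where $N(s)=(a+c)s+(b+d)$ and $D(s)=as+b$ (so that $N(s)/D(s)$ equals the bracketed term $1+\tfrac{cs+d}{as+b}$ of (\ref{eq:R(s)})), it suffices to show $\phi''(s)\le 0$ on the interval. The plan is: (i) fix the signs of $D$, $N$ and of the auxiliary quantity $\Delta:=cb-ad$; (ii) compress $\phi''$ into a single ratio; (iii) read off the sign.

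For part (i), I would substitute the definitions of $a,b,c,d$ and use the relation $s(1-\rho)=\tfrac{\zeta hP-Q-\Pcs s}{\zeta hP}$ obtained from (\ref{eq:rho-bound}); this rewrites $D(s)=s\big((1-\rho)\Va+\Vcov\big)$, $\,cs+d=s(1-\rho)hP$, and hence $N(s)=s\big((1-\rho)hP+(1-\rho)\Va+\Vcov\big)$. On the interval $s\ge\tfrac{d}{hP-c}>0$, while $s\le-\tfrac{d}{c}$ forces $1-\rho\ge 0$; together with $\Vcov>0$ this yields $D(s)>0$ and $N(s)>0$. A further short substitution collapses the auxiliary quantity to
\[
\Delta=cb-ad=-\,hP\,\Vcov\Big(1-\frac{Q}{\zeta hP}\Big),
\]
which is strictly negative since $Q\in[0,\zeta hP)$.

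For part (ii), differentiating twice with $N'=a+c$, $D'=a$ gives $\phi''(s)=2\big(\tfrac{a+c}{N}-\tfrac{a}{D}\big)+s\big(\tfrac{a^2}{D^2}-\tfrac{(a+c)^2}{N^2}\big)$; the identities $(a+c)D-aN=(a+c)b-a(b+d)=\Delta$ and $aN+(a+c)D=2aN+\Delta$, followed by factoring the difference of squares and using $D-as=b$, reduce this to
\[
\phi''(s)=\frac{\Delta\,(2bN-s\Delta)}{(ND)^2}.
\]
For part (iii), $(ND)^2>0$, $\Delta<0$ by part (i), and $2bN-s\Delta>0$ because $b>0$, $N>0$, $s>0$ and $\Delta<0$ make both terms nonnegative with $2bN$ strictly positive; hence $\phi''(s)<0$ throughout, so $\phi$, and therefore $R(s)$, is (strictly) concave on $[\tfrac{d}{hP-c},\min\{-\tfrac{d}{c},1\}]$. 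The one genuinely computational step is part (ii): verifying the two cancellation identities and arriving at the compact form of $\phi''$; once the sign of $\Delta$ and the positivity of $N$, $D$ are in hand, concavity follows immediately.
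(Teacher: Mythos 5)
Your proposal is correct and takes essentially the same route as the paper's proof in Appendix C: both differentiate $R(s)$ twice and show the numerator of $R''(s)$ is negative, and your compact form $\phi''(s)=\Delta\,(2bN-s\Delta)/(ND)^2$ with $\Delta=bc-ad$ expands to exactly the paper's numerator $(bc-ad)\bigl[\bigl(b(a+c)+a(b+d)\bigr)s+2b(b+d)\bigr]$ (the paper's ``$(d+d)$'' is a typo for ``$(b+d)$''). The only difference is the finishing step: the paper evaluates the affine factor at the endpoints $s=0$ and $s=-d/c$ of the enclosing interval, whereas you make its positivity termwise manifest from $b>0$, $N>0$, $s>0$, $\Delta<0$ (after verifying $N,D>0$ via the substitution back to $\rho$), which is a slightly cleaner but equivalent argument.
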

\begin{proof}
Please refer to Appendix \ref{appendix:proof lemma1}.
\end{proof}

By Lemma \ref{lemma:1}, the optimal $s^\ast\in[\frac{d}{hP-c},\min{\{-\frac{d}{c},1\}}]$ that maximizes $R(s)$ can be efficiently obtained by searching over $s\in[\frac{d}{hP-c},\min{\{-\frac{d}{c},1\}}]$ using the bisection method.
The optimal $\alpha^\ast$ is thus given by $\alpha^\ast=1-s^\ast$. The optimal $\rho^\ast$ is given by (\ref{eq:rho-bound}) with $\alpha=\alpha^{\ast}$. The corresponding $R$ is given by (\ref{eq:R-bound}) with $\alpha=\alpha^{\ast}$ and $\rho=\rho^{\ast}$. To summarize, each boundary point $(R,Q)$ of $\mathcal{C}_{\rm R-E}^{\rm OPS'}$ is achieved by a unique power splitting pair $(\alpha^{\ast},\rho^{\ast})$.

Fig. \ref{fig:RE-circuit-Sep} shows the achievable R-E regions (labeled as ``net energy'') for SepRx
with receiver circuit power consumption. The total harvested energy (labeled as ``total energy''),
including both the net energy
stored in the battery and the energy consumed by information decoding, is also shown in
Fig.~\ref{fig:RE-circuit-Sep} as a reference.
For SepRx with $\Pcs=25$, it is observed that
$\mathcal{C}_{\rm R-E}^{\rm TS'}\subseteq\mathcal{C}_{\rm R-E}^{\rm OPS'}$ and
$\mathcal{C}_{\rm R-E}^{\rm SPS'}\subseteq\mathcal{C}_{\rm R-E}^{\rm OPS'}$.
Moreover, SPS achieves the RE-region boundary only at low harvested energy region, where
$\mathcal{C}_{\rm R-E}^{\rm SPS'}$ and $\mathcal{C}_{\rm R-E}^{\rm OPS'}$
partially coincide. However, the performance of SPS becomes worse (even worse
than TS) when more harvested energy is desired, since it is unwise and energy-inefficient
to keep information receiver always on during the whole transmission time.

\begin{figure}
\centering
 \epsfxsize=0.7\linewidth
    \includegraphics[width=9cm]{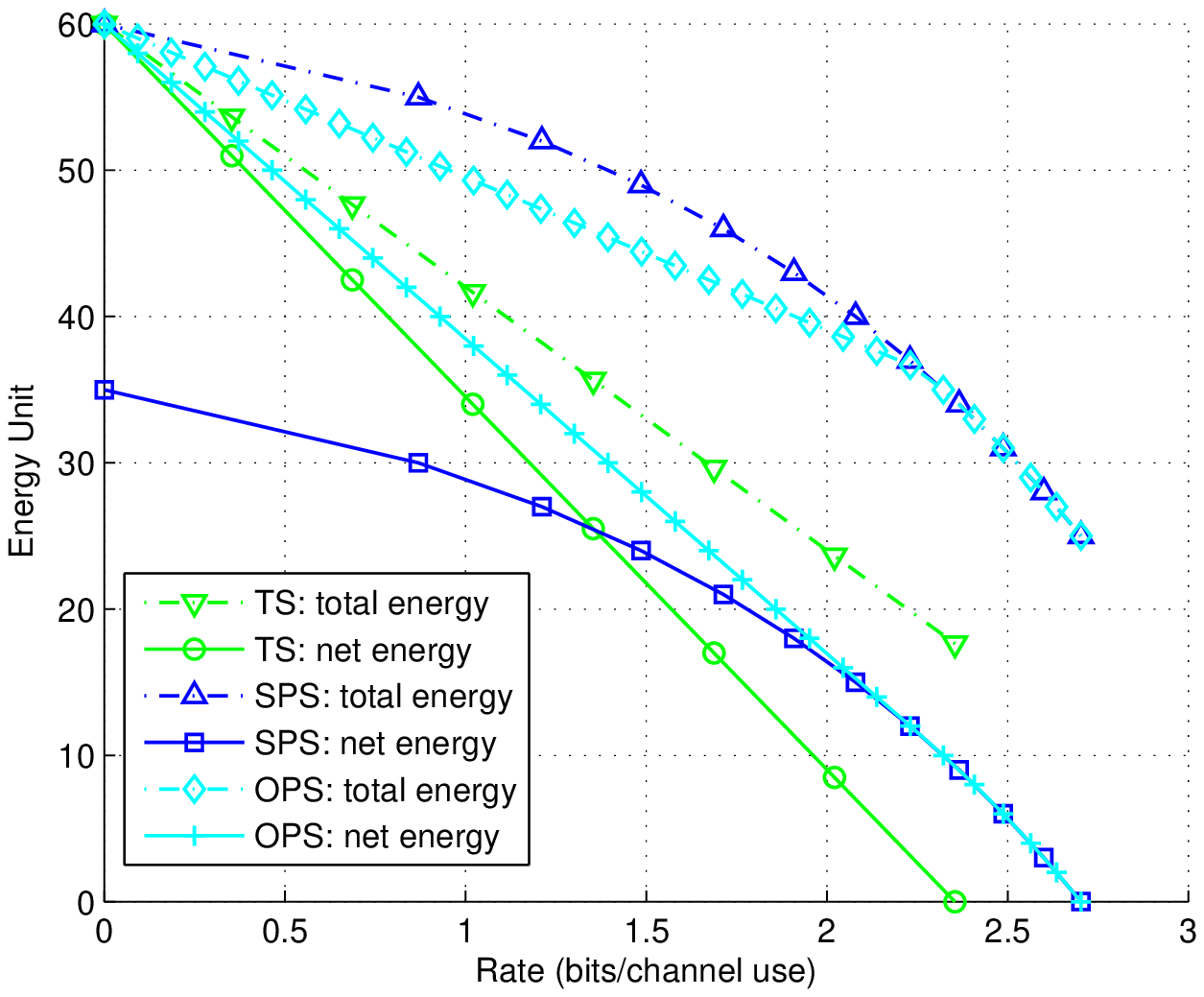}
\caption{Rate-energy tradeoff for the separated receiver with receiver circuit power consumption. It is assumed that $h=1, P=100, \zeta=0.6, \Va=1, \Vcov=10$ and $\Pcs=25$.}
\label{fig:RE-circuit-Sep}
\end{figure}

\subsection{Integrated Receiver with $\Pci>0$}
For the integrated receiver,
the achievable R-E region for the DPS scheme taking into account circuit power $\Pci$ is given by

\begin{footnotesize}\vspace{-0.1in}
\begin{align}\label{eq:RE Int-DPS-Pci}
    \mathcal{C}_{\rm R-E}^{\rm DPS'}
       (P)\triangleq \bigcup\limits_{\mv{\rho}}\bigg\{(R,Q):
&      0\leq Q\leq\frac{1}{N}\left(\sum\limits_{k=1}^{N}\rho_k \zeta hP -\sum\limits_{k=\alpha N+1}^{N}\Pci\right), \nonumber\\
&      R \left. \leq \frac{1}{N}\sum\limits_{k=\alpha N+1}^{N}C_{\rm NL} \right\}.
\end{align}
\end{footnotesize}

Since $R$ is independent of $\rho_k$, we should set $\rho_k\rightarrow 1$ for all $k=\alpha N+1,\ldots,N$.
Thus, the OPS scheme with $\rho\rightarrow 1$ is the optimal DPS scheme for the integrated receiver with $\Pci>0$.
Then (\ref{eq:RE Int-DPS-Pci}) can be simplified as
\begin{align}\label{eq:RE Int-OPS-Pci}
    \mathcal{C}_{\rm R-E}^{\rm OPS'}
       (P)\triangleq \bigcup\limits_{\alpha}\left\{(R,Q): \right.
&      0\leq Q\leq \zeta hP - (1-\alpha)\Pci, \nonumber\\
& \left.      R \leq (1-\alpha)C_{\rm NL} \right\}.
\end{align}
Note that when $\Pci<\zeta hP$, the boundary of $\mathcal{C}_{\rm R-E}^{\rm OPS'}(P)$ is determined by two lines
as $\alpha$ sweeps from 0 to 1, with one vertical line connecting
the two points $(C_{\rm NL},0)$ and $(C_{\rm NL},\zeta hP-\Pci)$, and another line
connecting the two points $(C_{\rm NL},\zeta hP-\Pci)$ and $(0,\zeta hP)$.
While $\Pci\geq\zeta hP$, the boundary of $\mathcal{C}_{\rm R-E}^{\rm OPS'}(P)$ is simply a straight
line connecting the two points $(\zeta hPC_{\rm NL}/\Pci,0)$ and $(0,\zeta hP)$ as $\alpha$ sweeps from
$1-\zeta hP/\Pci$ to $1$.

\begin{figure}
\centering
 \epsfxsize=0.7\linewidth
    \includegraphics[width=9cm]{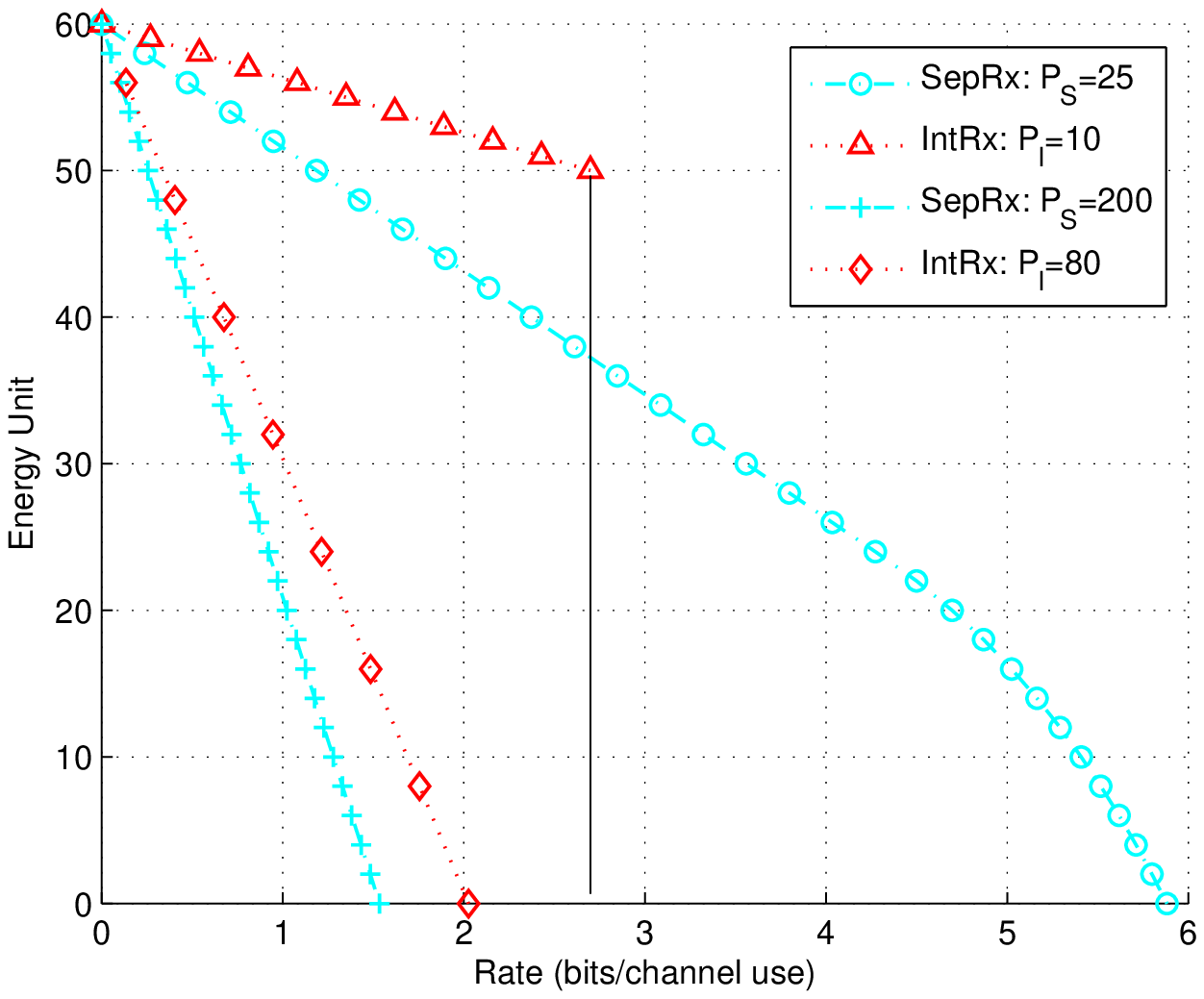}
\caption{Rate-energy tradeoff for separated vs. integrated receivers with receiver circuit power consumption. It is assumed that $h=1, P=100, \zeta=0.6, \Va=0.01, \Vcov=1$ and $\Srec=10$.}
\label{fig:RE-circuit-both}
\end{figure}

Fig. \ref{fig:RE-circuit-both} shows the achievable R-E regions for both cases of SepRx and IntRx
with receiver circuit power consumption. We consider two setups for the receiver circuit
power consumption, i.e., low circuit power with $\Pcs=25, \Pci=10$, and high circuit power
with $\Pcs=200, \Pci=80$.
For the low circuit power with $\Pcs=25, \Pci=10$, IntRx is superior over SepRx when more
harvested energy is desired, while SepRx is superior when less harvested energy
(no greater than 37 energy units) is required.
For the high circuit power with $\Pcs=200, \Pci=80$, IntRx is always superior over SepRx,
since for SepRx much more transmission time needs to be allocated for harvesting
energy to compensate the power consumed by information decoding.

\section{Practical Modulation}\label{sec:practical modulation}

In this section, we study the performances for the two types of receivers under
a realistic system setup that employs practical modulation.
Let the signal set (constellation) be denoted by $\mathcal{X}$.
The size of $\mathcal{X}$ is denoted by $M$ with $M=2^l$, and $l\geq 1$ being an integer.
It is assumed that the maximum rate that the practical modulation can support is $l\leq10$ bits/channel use.
The $i$-th constellation point in $\mathcal{X}$ is denoted by $x_i, i=1,\ldots,M$,
with equal probability $p_X(x_i)=1/M$ for simplicity.
For the separated receiver, we assume that coherent $M$-ary quadrature amplitude
modulation (QAM) is utilized for transmission. The symbol error rate (SER),
denoted by $P_{\rm s}^{\rm{QAM}}$, is approximated by \cite{Goldsmith}
\begin{equation}\label{eq:Ps-QAM}
    P_{\rm s}^{\rm{QAM}} \approx \frac{4(\sqrt{M}-1)}{\sqrt{M}} \mathcal{Q}\left( \sqrt{\frac{3\tau_s}{M-1}} \right)
\end{equation}
where $\tau_{\rm s}$ denotes the average SNR per symbol at the information receiver\footnote{Binary
phase shift keying (BPSK) is used when $l=1$.
For simplicity, we use (\ref{eq:Ps-QAM}) to approximate the SER of BPSK at high SNR.}.
The approximation is tight at high SNR, and is taken to be exact for simplicity in the sequel.
For the integrated receiver, as mentioned earlier in Section \ref{sec:integrated receivers},
information is encoded by the energy (power) of the transmitted
signal. Similar to the pulse amplitude modulation (PAM), we assume the {\it pulse energy modulation} (PEM), with equispaced {\it positive} constellation points given by
\begin{equation}\label{}
    x_i = \frac{2(i-1)}{M-1}, \ i=1,\ldots,M.
\end{equation}
A closed-form expression for the symbol error rate $P_{\rm s}^{\rm{PEM}}$ appears intractable, due
to the coupled antenna and rectifier noise for the channel (\ref{eq:new model}).
For most practical systems, the rectifier noise power will be much greater than the antenna noise power, while the antenna noise
is approximately at the thermal noise level. This justifies the assumption that $\Va\ll \Srec$ and we thus approximate the channel  (\ref{eq:new model})
with (\ref{eq:sim model-Na=0}). For simplicity, the decision boundary is chosen as the perpendicular bisector of each pair of adjacent two points, and the symbol error rate can be derived to be
\begin{equation}\label{}
    P_{\rm s}^{\rm{PEM}} = \frac{2(M-1)}{M} \mathcal{Q}\left( \frac{\tau_s'}{M-1} \right)
\end{equation}
where $\tau_s'=hP/\sigma_{\rm rec}$ is defined as the average SNR per symbol at the information receiver.

For both separated and integrated receivers, we assume the transmitter can adapt the transmission rate
such that the symbol error rate is less than a target value $P_{\rm s}^{\rm{tgt}}$,
i.e., $P_{\rm s}^{\rm{QAM}}\leq P_{\rm s}^{\rm{tgt}}$ and $P_{\rm s}^{\rm{PEM}}\leq P_{\rm s}^{\rm{tgt}}$ for the separated and integrated receivers, respectively.
Moreover, we assume that there is a minimum net harvested energy
requirement $Q_{\rm req}$ at the receiver side, i.e., $Q\geq Q_{\rm req}$, where $0\leq Q_{\rm req}\leq\zeta hP$.
With the SER constraint and minimum harvested energy constraint, our objective is to achieve the maximum rate.
For the separated receiver with OPS scheme, the maximum achievable rate can be obtained by

\begin{small}\vspace{-0.1in}
\begin{align}
\mathrm{(P1)}: \nonumber\\
~\mathop{\mathtt{max.}}_{\alpha,\rho,M} & ~~ R=(1-\alpha)\log_2 M \nonumber \\
\mathtt{s.t.} & ~~ \frac{4(\sqrt{M}-1)}{\sqrt{M}} \mathcal{Q}\left(\sqrt{\frac{3}{M-1}\cdot \frac{(1-\rho)hP}{(1-\rho)\Va+\Vcov}}\right)		
 \leq P_{\rm s}^{\rm tgt}, \label{eq:SER-P1} \\
              & ~~ \alpha\zeta hP + (1-\alpha)\rho\zeta hP -(1-\alpha)\Pcs \geq Q_{\rm req},  \label{eq:energy-P1} \\
              & ~~ 0\leq\alpha\leq1, ~ 0\leq\rho\leq1, \nonumber \\
              & ~~ M=2^{l}, ~ l\in \{1,2,\ldots,10\} \nonumber
\end{align}
\end{small}Here, the optimization variables are the power splitting pair $(\alpha,\rho)$ and the modulation size $M$.

For the integrated receiver with OPS scheme, the maximum achievable rate can be obtained by
\begin{align}
\mathrm{(P2)}:~\mathop{\mathtt{max.}}_{\alpha,M} & ~~ R=(1-\alpha)\log_2 M \nonumber \\
\mathtt{s.t.} & ~~ \frac{2(M-1)}{M} \mathcal{Q}\left(\frac{1}{M-1}\cdot\frac{hP}{\sigma_{\rm rec}}\right) \leq P_{\rm s}^{\rm tgt}, \label{eq:SER-P2}\\
              & ~~ \zeta hP -(1-\alpha)\Pci \geq Q_{\rm req},  \nonumber \\
              & ~~ 0\leq\alpha\leq1, \nonumber \\
              & ~~ M=2^{l}, ~ l\in \{1,2,\ldots,10\} \nonumber
\end{align}
Note that here the optimization variables only include $\alpha$ and $M$, since the OPS scheme with
$\rho\rightarrow 1$ is optimal for the integrated receiver (c.f. Section \ref{sec:circuit power}).

We denote the maximum rate for (P1) and (P2) as $\Rs$ and $\Ri$, respectively. Similarly, the optimal
variables for (P1) and (P2) are denoted with corresponding superscripts and subscripts,
e.g., $\alpha_1^\ast, \rho_1^\ast$, etc.
With $0\leq Q_{\rm req}\leq\zeta hP$ and reasonable SNR (such that the SER constraints can be satisfied by some $M$),
the optimal solution for (P1) is obtained by an exhaustive search for $\rho_1^\ast$: for each fixed $\rho_1\in[0,1)$, we have $R_1^\ast=(1-\alpha_1^\ast)\log_2 M_1^\ast$, where $\alpha_1^\ast=\left[\frac{Q_{\rm req}-\rho_1\zeta hP+\Pcs}{(1-\rho_1)\zeta hP+\Pcs}\right]^+$ and $M_1^\ast$ attains the maximum value under the SER constraint (\ref{eq:SER-P1}); the optimal $\rho_1^\ast$ is then obtained to maximize $R_1^\ast$.
The optimal solution for (P2) is given by $\Ri=(1-\alpha_2^\ast)\log_2 M_2^\ast$, where
$\alpha_2^\ast=\left[\frac{Q_{\rm req}-\zeta hP+\Pci}{\Pci}\right]^+$ and $M_2^\ast$ is maximized under
the SER constraint (\ref{eq:SER-P2}).
For both (P1) and (P2), the achievable rate $R$ is determined
by both the modulation size $M$ and the time percentage $\alpha$ that the information decoder
operates in the off mode. Moreover, as the received signal power $hP$ decreases, $M$ decreases
to satisfy the modulation constraint and $\alpha$ increases to satisfy
the harvested energy constraint, both of which result in a decrease of the achievable rate.

Typically for practical systems we have $\Pcs>\Pci>0$, since the RF band mixer in the separated receiver
will consume additional circuit power. Henceforth, we assume $\Pcs>\Pci>0$.
\begin{proposition}\label{proposition:3}
For separated and integrated receivers with $0\leq Q_{\rm req}\leq \zeta hP$ and $\Pcs\geq\Pci>0$, we have $\alpha_1^\ast\geq\alpha_2^\ast$. Moreover,
if $M^\ast_1\leq M_2^\ast$, then the maximum achievable rate by the separated receiver will be
no greater than that by the integrated receiver, i.e., $\Rs\leq \Ri$.
\end{proposition}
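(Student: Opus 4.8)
The plan is to reduce the proposition to the single inequality $\alpha_1^\ast \ge \alpha_2^\ast$, from which the rate statement drops out immediately, and to prove that inequality by two short monotonicity arguments applied to the closed-form expressions for $\alpha_1^\ast$ and $\alpha_2^\ast$ recorded just above the proposition. For the reduction: if $\alpha_1^\ast \ge \alpha_2^\ast$, then since $\alpha_1^\ast,\alpha_2^\ast \in [0,1]$ we have $0 \le 1-\alpha_1^\ast \le 1-\alpha_2^\ast$; under the hypothesis $M_1^\ast \le M_2^\ast$, and since the modulation size is $2^l$ with $l \ge 1$ so that $M_i^\ast \ge 2$, we also have $0 < \log_2 M_1^\ast \le \log_2 M_2^\ast$. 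Multiplying these two pairs of non-negative inequalities gives $\Rs = (1-\alpha_1^\ast)\log_2 M_1^\ast \le (1-\alpha_2^\ast)\log_2 M_2^\ast = \Ri$.

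It remains to show $\alpha_1^\ast \ge \alpha_2^\ast$. First I would fix the optimal $\rho_1^\ast \in [0,1)$ returned by the search in $\mathrm{(P1)}$ and view $\alpha_1^\ast = \big[\, g(\rho_1^\ast) \,\big]^+$, where $g(\rho) = \dfrac{(Q_{\rm req}+\Pcs) - \rho\zeta hP}{\Pcs + (1-\rho)\zeta hP}$. Substituting $u = \rho\zeta hP$, $A = Q_{\rm req}+\Pcs$ and $B = \Pcs+\zeta hP$, we have $g = \frac{A-u}{B-u}$ with $B-A = \zeta hP - Q_{\rm req} \ge 0$ by assumption, so $\frac{d}{du}\frac{A-u}{B-u} = \frac{A-B}{(B-u)^2} \le 0$; since $u$ is increasing in $\rho$ and $B-u = \Pcs + (1-\rho)\zeta hP \ge \Pcs > 0$ on $[0,1]$, the map $g$ is non-increasing in $\rho$ on $[0,1]$, and it is bounded above by $1$ (which is again equivalent to $Q_{\rm req} \le \zeta hP$). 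Because $[\cdot]^+$ is non-decreasing and $\rho_1^\ast < 1$, this gives $\alpha_1^\ast \ge \big[\, g(1) \,\big]^+ = \big[\tfrac{Q_{\rm req}-\zeta hP+\Pcs}{\Pcs}\big]^+$, and also $\alpha_1^\ast \le 1$, so $1-\alpha_1^\ast \ge 0$ as was used above.

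Second, I would compare $\big[\tfrac{Q_{\rm req}-\zeta hP+\Pcs}{\Pcs}\big]^+$ with $\alpha_2^\ast = \big[\tfrac{Q_{\rm req}-\zeta hP+\Pci}{\Pci}\big]^+$ using the fact that $x \mapsto \tfrac{Q_{\rm req}-\zeta hP+x}{x} = 1 - \tfrac{\zeta hP - Q_{\rm req}}{x}$ is non-decreasing for $x > 0$, which holds precisely because $\zeta hP - Q_{\rm req} \ge 0$. Since $\Pcs \ge \Pci > 0$, this yields $\big[\tfrac{Q_{\rm req}-\zeta hP+\Pcs}{\Pcs}\big]^+ \ge \big[\tfrac{Q_{\rm req}-\zeta hP+\Pci}{\Pci}\big]^+ = \alpha_2^\ast$. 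Chaining with the bound from the first step gives $\alpha_1^\ast \ge \alpha_2^\ast$, and with the reduction above the proof is complete.

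The only slightly delicate point is the observation in the first step that, among all admissible values of $\rho_1^\ast$, the smallest possible $\alpha_1^\ast$ is the limiting value at $\rho \to 1$; once that is identified, the comparison collapses to the elementary monotonicity of $x \mapsto (Q_{\rm req}-\zeta hP+x)/x$ in the circuit power, and everything else is routine care with the clipping at zero and with the feasibility inequality $Q_{\rm req} \le \zeta hP$ (used twice, once for $B \ge A$ and once for $g \le 1$). I do not foresee any real obstacle beyond this bookkeeping.
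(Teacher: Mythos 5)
Your proposal is correct and follows essentially the same route as the paper's own proof: lower-bound $\alpha_1^\ast$ by its limiting value at $\rho_1^\ast\to 1$ using monotonicity in $\rho$, compare that bound to $\alpha_2^\ast$ via monotonicity of $(Q_{\rm req}-\zeta hP+x)/x$ in the circuit power, and then multiply the two inequalities $1-\alpha_1^\ast\leq 1-\alpha_2^\ast$ and $\log_2 M_1^\ast\leq\log_2 M_2^\ast$. You merely make explicit the two monotonicity computations that the paper asserts without detail, which is fine.
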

\begin{proof}
Please refer to Appendix \ref{appendix:proof prop3}.
\end{proof}

Most practical systems of interest typically operate at the
high SNR regime for the information receiver, due to the high-power operating requirement for the energy
receiver. Thus, for sufficiently small transmission distance, it is expected that
both receivers can support the maximum modulation size under the SER constraint, i.e., $M^\ast_1= M_2^\ast=2^{10}$.
Thus, by Proposition \ref{proposition:3}, the integrated receiver
outperforms the separated receiver for sufficiently small transmission distance.

\begin{figure}
\centering
 \epsfxsize=0.7\linewidth
    \includegraphics[width=9cm]{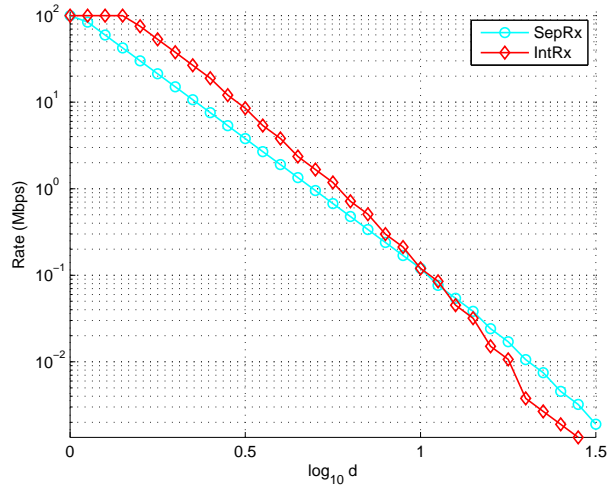}
\caption{Maximum achievable rate for separated and integrated receivers over different transmission distance.}
\label{fig:practical compare rate}
\end{figure}

Fig. \ref{fig:practical compare rate} shows an example of the maximum achievable rate
for a practical point-to-point wireless system with separated or integrated receiver.
The corresponding modulation size $M$ and time percentage $\alpha$ are shown in Fig. \ref{fig:L and alpha}.
The transmitter power is assumed to be $P=1$ watt(W) or $30\rm{dBm}$.
The distance from the transmitter to the receiver
is assumed to be $d$ meters with $d\geq1$, which results in approximately $(-30-30\log_{10}d)\rm{dB}$ of signal
power attenuation at a carrier frequency assumed as $f_{\rm c}=900\rm{MHz}$.
The bandwidth of the transmitted signal is assumed to be $10\rm{MHz}$.
For information receiver, the antenna noise temperature is assumed to be $290\rm{K}$, which
corresponds to $\Va=-104\rm{dBm}$ over the bandwidth of $10\rm{MHz}$.
As in most practical wireless communication systems, it is assumed that the processing noise
power is much greater than the antenna noise power, in which case the antenna noise can be omitted.
In particular, it is assumed that $\Vcov=-70\rm{dBm}$ for the separated receiver \cite{TI-noise} and $\Srec=-50\rm{dBm}$
for the integrated receiver.
The circuit power consumed by information decoding is assumed to be $\Pcs=0.5\rm{mW}$ for the separated
receiver, and $\Pci=0.2\rm{mW}$ for the integrated receiver.
For energy receiver, the energy conversion efficiency is assumed to be $\zeta=60\%$.
The minimum harvested energy requirement $Q_{\rm req}$ is set to be zero, which is the minimum requirement
for a zero-net-energy system that does not need external power source, i.e., the receiver is ``self-sustainable''.
The symbol error rate target is assumed to be $P_{\rm s}^{\rm{tgt}}=10^{-5}$.

\begin{figure}
\centering
 \epsfxsize=0.7\linewidth
    \includegraphics[width=9cm]{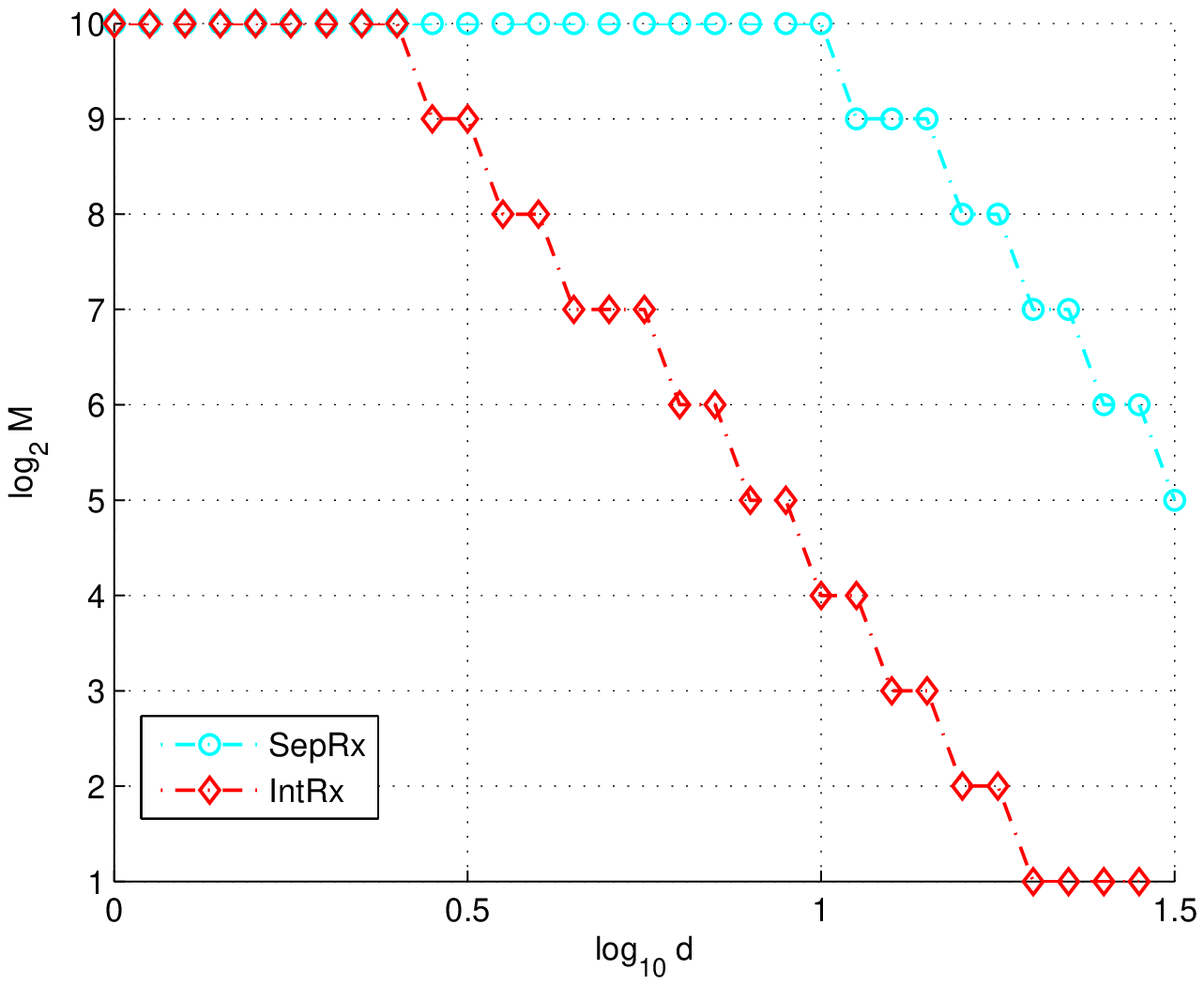}
    \includegraphics[width=9cm]{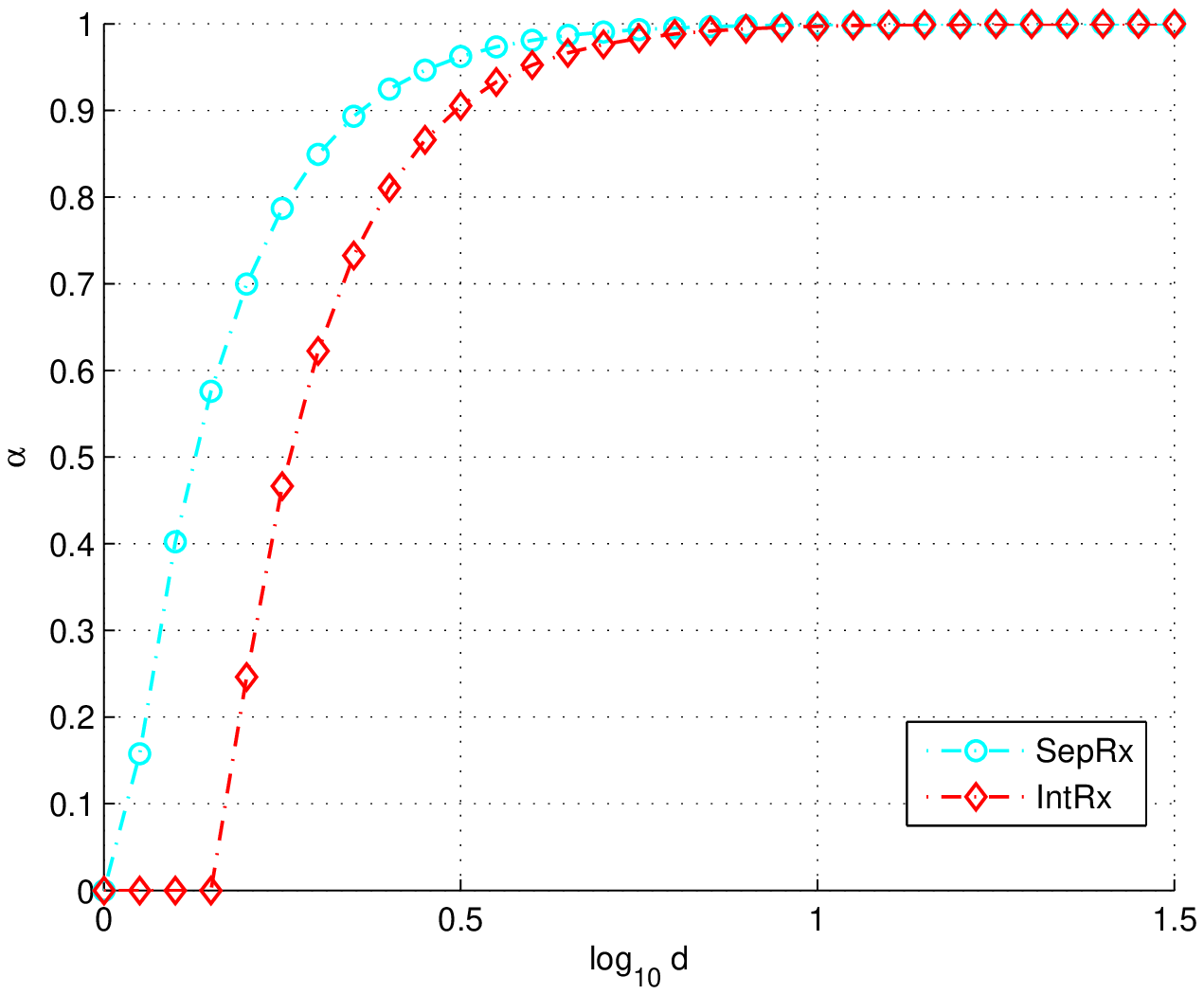}
\caption{Optimal modulation size ($M$) and information receiver off-time percentage ($\alpha$) for separated and integrated receivers.}
\label{fig:L and alpha}
\end{figure}

In Fig. \ref{fig:practical compare rate}, it is observed that
when $0\leq \log_{10} d \leq 1$, IntRx achieves more rate than SepRx.
By Proposition \ref{proposition:3}, IntRx outperforms SepRx over the range $0\leq\log_{10} d\leq0.4$ with $M_1^\ast=M_2^\ast=2^{10}$;
however, Proposition \ref{proposition:3} provides only a sufficient condition, numerical results
show that IntRx outperforms SepRx over longer distances up to $\log_{10} d\leq1$. This is due to the fact that
although SepRx supports higher-order constellations (larger $M$) than IntRx
when $0.4<\log_{10} d \leq 1$, the information receiver of SepRx needs to operate in the off mode
for more time (larger $\alpha$) to compensate the power consumed by information decoding
(c.f. Fig. \ref{fig:L and alpha}). It turns out that over this range, the average rate over the whole transmission time
of SepRx is less than that achieved by IntRx.
As $\log_{10}d$ increases, the rate gap between SepRx and IntRx shrinks and converges when $\log_{10}d$ is around 1.
When $1.1\leq \log_{10} d \leq 1.5$, SepRx achieves more rate than IntRx, since $\alpha$ for both receivers
approaches to 1 (c.f. Fig. \ref{fig:L and alpha}), while the achievable rates are dominated by the modulation size
($M$, c.f. Fig. \ref{fig:L and alpha}).
Note that when $\log_{10}d=1.5$, no modulation can support IntRx due to the extremely low received SNR; however, SepRx
can still achieve some positive rate.
In addition, Fig. \ref{fig:L and alpha} shows that in general IntRx exploits lower complexity (smaller $M$) in generating signal constellation.

\section{Conclusion}\label{sec:conclusion}
This paper investigates practical receiver designs for simultaneous wireless
information and power transfer. Based on {\it dynamic power splitting} (DPS),
we propose two practical receiver architectures, namely, {\it separated} and
{\it integrated} information and energy receivers. For the separated receiver,
the received signal by the antenna is split into two signal streams in the RF band,
which are then separately fed to the conventional energy receiver and information
receiver for harvesting energy and decoding information, respectively.
For the integrated receiver, part of the information decoding implementation,
i.e., the RF to baseband conversion, is integrated to the energy receiver via the
rectifier. For both receivers, we characterize the rate-energy performance taking
circuit power consumption into account. Numerical results show that
when the circuit power consumptions are small (compared with the received signal power),
the separated receiver is superior at low harvested energy region;
whereas the integrated receiver performs better at high harvested energy region.
When the circuit power consumptions are large, the integrated receiver is superior.
Moreover, the performance for the two types of receivers is studied under a realistic system
setup that employs practical modulation. With symbol error rate constraint and minimum
harvested energy constraint, the maximum achievable rates by the two types
of receivers are compared. It is shown that for a system with zero-net-energy consumption,
the integrated receiver achieves more rate than separated receiver at small transmission distances.


\appendices

\section{Proof of Proposition \ref{proposition:1}}\label{appendix:proof prop1}
To show $\mathcal{C}_{\rm R-E}^{\rm DPS}(P)=\mathcal{C}_{\rm R-E}^{\rm SPS}(P),P\geq 0$,
it suffices for us to show that $\mathcal{C}_{\rm R-E}^{\rm SPS}(P)\subseteq\mathcal{C}_{\rm R-E}^{\rm DPS}(P),P\geq 0$
and $\mathcal{C}_{\rm R-E}^{\rm DPS}(P)\subseteq\mathcal{C}_{\rm R-E}^{\rm SPS}(P),P\geq 0$.
The first part of proof is trivial, since SPS is just a special case of DPS by
letting $\rho_k=\rho, \forall k$ (c.f. (\ref{eq:rho-SPS})).
Next, we prove the second part.
Assuming that $f(\rho)=\log_2 \left(1+\frac{(1-\rho)hP}{(1-\rho)\sigma^2_{\rm A}+\sigma^2_{\rm cov}}\right)$,
it is easy to verify that $f(\rho)$ is concave in $\rho\in[0,1]$. By Jensen's inequality, we have
$\frac{1}{N}\sum\limits_{k=1}^{N} f(\rho_k) \leq f\left(\frac{1}{N}\sum\limits_{k=1}^{N}\rho_k\right)$.
Thus, for $\forall {\mv{\rho}}=[\rho_1,\ldots,\rho_N]^T$,
$\exists \rho=\frac{1}{N}\sum\limits_{k=1}^{N}\rho_k$, so that $\frac{1}{N}\sum\limits_{k=1}^{N}\rho_k \zeta hP
= \rho \zeta hP$ and $\frac{1}{N}\sum\limits_{k=1}^{N} f(\rho_k) \leq f(\rho)$.
Since R-E region is defined as the union of rate-energy pairs $(R,Q)$ under all possible $\mv{\rho}$, it follows
immediately that $\mathcal{C}_{\rm R-E}^{\rm DPS}(P)\subseteq\mathcal{C}_{\rm R-E}^{\rm SPS}(P),P\geq 0$,
which completes the proof of Proposition \ref{proposition:1}.

\section{Proof of Proposition \ref{proposition:2}}\label{appendix:proof prop2}
To show $\mathcal{C}_{\rm R-E}^{\rm DPS'}(P)=\mathcal{C}_{\rm R-E}^{\rm OPS'}(P),P\geq 0$,
it suffices for us to show that $\mathcal{C}_{\rm R-E}^{\rm OPS'}(P)\subseteq\mathcal{C}_{\rm R-E}^{\rm DPS'}(P),P\geq 0$
and $\mathcal{C}_{\rm R-E}^{\rm DPS'}(P)\subseteq\mathcal{C}_{\rm R-E}^{\rm OPS'}(P),P\geq 0$.
The first part of proof is trivial, since OPS is just a special case of DPS by
letting $\rho_k=\rho, k=\alpha N,\ldots,N$ (c.f. (\ref{eq:rho-OPS})).
Next, we prove the second part.
By (\ref{eq:RE DPS-Pmix}), $\rho_k$'s are optimized at $\rho_k=1$ for $k=1,\ldots,\alpha N$; thus, we have
$Q\leq\alpha\zeta hP+\frac{1}{N}\sum\limits_{k=\alpha N+1}^{N}\rho_k \zeta hP -(1-\alpha)\Pcs$ for DPS.
For any given $\alpha$, by Jensen's inequality we have
$\frac{1}{(1-\alpha)N}\sum\limits_{k=\alpha N+1}^{N} f(\rho_k) \leq f\left(\frac{1}{(1-\alpha)N}\sum\limits_{k=\alpha N+1}^{N}\rho_k\right)$. Thus, for $\forall \alpha$ and $\forall {\mv{\rho}}=[\rho_1,\ldots,\rho_N]^T$,
$\exists \rho=\frac{1}{(1-\alpha)N}\sum\limits_{k=\alpha N+1}^{N}\rho_k$, so that
$\frac{1}{N}\sum\limits_{k=\alpha N+1}^{N}\rho_k \zeta hP = (1-\alpha)\rho\zeta hP$ and
$\frac{1}{N}\sum\limits_{k=\alpha N+1}^{N} f(\rho_k) \leq (1-\alpha)f(\rho)$.
Since R-E region is defined as the union of rate-energy pairs $(R,Q)$ under all possible $\mv{\rho}$, it follows
immediately that $\mathcal{C}_{\rm R-E}^{\rm DPS'}(P)\subseteq\mathcal{C}_{\rm R-E}^{\rm OPS'}(P),P\geq 0$,
which completes the proof of Proposition \ref{proposition:2}.

\section{Proof of Lemma \ref{lemma:1}}\label{appendix:proof lemma1}
From (\ref{eq:R(s)}), the first and second derivatives of $R(s)$ with respect of $s$ are given by
\begin{equation}\label{eq:R'}
    \frac{d R}{d s}=\log_2\left(1+\frac{cs+d}{as+b}\right) + \frac{s(bc-ad)}{\left((a+c)s+b+d\right)(as+b)\ln 2},
\end{equation}
\begin{equation}\label{eq:R''}
    \frac{d^2 R}{d s^2}=\frac{(bc-ad)\left(\left(b(a+c)+a(d+d)\right)s+2b(d+d)\right)}{\left((a+c)s+b+d\right)^2(as+b)^2\ln 2}.
\end{equation}
From (\ref{eq:R''}), the sign of $\frac{d^2 R}{d s^2}$ is identical with the line
$f_2(s)=(bc-ad)\left(\left(b(a+c)+a(d+d)\right)s+2b(d+d)\right)$. Note that $bc-ad=-\Vcov(hP-Q/\zeta)< 0$,
$f_2(0)=2b(b+d)(bc-ad)< 0$, and $f_2(-\frac{d}{c})=\frac{(2b+d)(bc-ad)^2}{c}< 0$; thus, we have
$\frac{d^2 R}{d s^2}< 0$ for $s\in [0,-\frac{d}{c}]$. Since the set $[\frac{d}{hP-c},\min{\{-\frac{d}{c},1\}}]$
is a subset of the set $[0,-\frac{d}{c}]$, we have $\frac{d^2 R}{d s^2}< 0$ for $s\in[\frac{d}{hP-c},\min{\{-\frac{d}{c},1\}}]$. Thus,
$R(s)$ is concave in $s\in [\frac{d}{hP-c},\min{\{-\frac{d}{c},1\}}]$, which completes the proof of Lemma \ref{lemma:1}.

\section{Proof of Proposition \ref{proposition:3}}\label{appendix:proof prop3}
We first consider (P1) with $0\leq Q_{\rm req}\leq \zeta hP$ for the separated receiver.
The optimal $\alpha$ for (P1) is given by $\alpha_1^\ast=\left[\frac{Q_{\rm req}-\rho_1^\ast\zeta hP+\Pcs}{(1-\rho_1^\ast)\zeta hP+\Pcs}\right]^+$.
Since $Q_{\rm req}\leq \zeta hP$, $\alpha^\ast_1$ decreases as $\rho_1^\ast$ increases.
Thus, we have
\begin{equation}\label{eq:temp-alpha-Sep-2}
    \alpha_1^\ast\geq\frac{Q_{\rm req}-\rho_1^\ast\zeta hP+\Pcs}{(1-\rho_1^\ast)\zeta hP+\Pcs}\Big|_{\rho_1^\ast=1}=\frac{Q_{\rm req}-\zeta hP+\Pcs}{\Pcs}.
\end{equation}

Next, for the integrated receiver with $0\leq Q_{\rm req}\leq \zeta hP$, the optimal $\alpha$ for (P2) is given by
\begin{equation}\label{eq:temp-alpha-Int}
    \alpha_2^\ast=\left[\frac{Q_{\rm req}-\zeta hP+\Pci}{\Pci}\right]^+
\end{equation}
From (\ref{eq:temp-alpha-Sep-2}) and (\ref{eq:temp-alpha-Int}), we have $\alpha_1^\ast\geq\alpha_2^\ast$, given that $\Pcs>\Pci$. Since $R=(1-\alpha)\log_2 M$, we have $R_1^\ast\leq R_2^\ast$, given that $\alpha_1^\ast\geq\alpha_2^\ast$ and $M_1^\ast\leq M_2^\ast$. The proof of Proposition \ref{proposition:3} thus follows.

\end{document}